%
%
%
%
%
%
%
\documentclass[%
 reprint,
 amsmath,amssymb,
 aps,
]{revtex4-1}
\usepackage{amsfonts}
\usepackage{amsmath}
\usepackage{booktabs}
\usepackage{amsthm}
\usepackage{graphicx}
\usepackage{dcolumn}
\usepackage{bm}


\begin{document}

\preprint{APS/123-QED}

\title{Novel methods to construct nonlocal sets of orthogonal product states in arbitrary bipartite high-dimensional system}
\author{Guang-Bao Xu$^{1,2}$}
\author{Dong-Huan Jiang$^{1}$}
\email{donghuan\_jiang@163.com}

\affiliation{
$^{1}$College of Mathematics and Systems Science, Shandong University of Science and Technology, Qingdao, 266590, China\\
$^{2}$State Key Laboratory of Networking and Switching Technology (Beijing University of Posts and Telecommunications), Beijing, 100876, China
}

\date{\today}

\begin{abstract}
Nonlocal sets of orthogonal product states (OPSs) are widely used in quantum protocols owing to their good property. In arXiv: 2003.03898, the authors consturcted some unextendible product bases in $\mathbb{C}^{m} \otimes \mathbb{C}^{n}$ quantum system for $n\geq m\geq3$. We find that a subset of their unextendible product basis (UPB) cannot be perfectly distinguished by local operations and classical communication (LOCC). We give a proof for the nonlocality of the subset with Vandermonde determinant and Kramer's rule. Meanwhile, we give a novel method to construct a nonlocal set with only $2(m+n)-4$ OPSs in $\mathbb{C}^{m} \otimes \mathbb{C}^{n}$ quantum system for $m\geq3$ and $n\geq3$. By comparing the number of OPSs in our nonlocal set with that of the existing results, we know that $2(m+n)-4$ is the minimum number of OPSs to construct a nonlocal and completable set in $\mathbb{C}^{m} \otimes \mathbb{C}^{n}$ quantum system so far. This means that we give the minimum number of a completable set of OPSs that cannot be perfectly distinguished by LOCC in arbitrary given space. Furthermore, we propose the concept of isomorphism of two nonlocal sets of OPSs. We analyze the relationship between different nonlocal sets in a given space using the new concept. Our work is of great help to understand the structure and classification of locally indistinguishable OPSs in arbitrary bipartite high-dimensional system.
\begin{description}
\item[PACS numbers]
03.65.Ud, 03.67.Mn
\end{description}
\end{abstract}

\pacs{Valid PACS appear here}
\maketitle


\section{\label{sec:level1}Introduction\protect}
Quantum nonlocality without entanglement (QNWE) is an important problem in quantum information theory. Many related works \cite{Bennett1999,Walgate2002,Fei2006,Halder2018,Halder2019,
Rout2019,Li2019,Band2018,Band2016, Feng2009,Sixia2015,CHB1999,Walgate2000,Niset2006,
Jiang2010,Yu2011} are proposed so far since a set of locally indistinguishable orthogonal product states (OPSs) can be used to design quantum protocols, such as quantum voting \cite{DHJiang2020} and quantum cryptography \cite{JWang2017,Rahaman2015,Guo2001}. Although great progress \cite{SHalder2019,Croke2017,XZhang2016,Duan2010} has been made in the field of QNWE, there are still some problems that have not been solved effectively. For example, the minimum number of OPSs to construct a nonlocal set in a given space and the classifications of different nonlocal sets of OPSs.

As we know, a set of OPSs can be exactly discriminated by global positive-operator-valued measurements. However, this task may not be accomplished if only local operations and classical communication (LOCC) are permitted. Many people intuitively thought that some sets of quantum states cannot be reliably discriminated by LOCC because of quantum entanglement. Bennett \emph{et al.} \cite{Bennett1999} firstly showed that a set of 9 OPSs cannot be reliably discriminated by LOCC in  $\mathbb{C}^{3} \otimes \mathbb{C}^{3}$. This counterintuitive phenomenon is called QNWE by Bennett \emph{et al.}. Encouraged by Bennett \emph{et al.}'s work, many people began to engage in the research of QNWE and a lot of results \cite{Walgate2002,Zhang2014} were proposed.

Now we introduce the development of the construction methods of locally indistinguishable OPSs in bipartite quantum systems. Zhang \emph{et al.} \cite{Zhang2014} gave a method to construct a complete orthogonal product basis (OPB) that cannot be exactly distinguished by LOCC in $\mathbb{C}^{d} \otimes \mathbb{C}^{d}$ quantum system for $d\geq3$. Based on this result, Wang \emph{et al.} \cite{Wang2015} pointed out that a subset of Zhang \emph{et al.}'s OPB is still nonlocal. Meanwhile, they generalized their method to a more general $\mathbb{C}^{m} \otimes \mathbb{C}^{n}$ quantum system for $m\geq3$ and $n\geq3$. Zhang \emph{et al.} \cite{Zhang2015} constructed a nonlocal set with $4d-4$ OPSs in $\mathbb{C}^{d} \otimes \mathbb{C}^{d}$ quantum system for $d\geq 3$ and then generalized this result to a more general case \cite{Zhang2016} in $\mathbb{C}^{m} \otimes \mathbb{C}^{n}$ quantum system for $3\leq m\leq n$. Recently, Zhang \emph{et al.} \cite{Zhangxiaoqian2017} presented a nonlocal set of $3(n+m)-8$ OPSs in $\mathbb{C}^{m}\otimes\mathbb{C}^{n}$ quantum system. Although many achievements have been made in constructing nonlocal sets of OPSs, most of the results contain a lot of elements and have complex structures. It is interesting to find the minimum number of elements to form a completable set of OPSs which cannot be perfectly distinguished by LOCC in a general $\mathbb{C}^{m} \otimes \mathbb{C}^{n}$ quantum system for $m\geq 3$ and $n\geq 3$.

In Ref. \cite{SHalder2019}, Halder \emph{et al.} constructed an unextendible product basis (UPB) in $\mathbb{C}^{d}\otimes\mathbb{C}^{d}$ quantum system with $d$ is odd and $d\geq 5$. Inspired by their work, Shi \emph{et al.} give a more general method to construct UPBs in $\mathbb{C}^{m}\otimes\mathbb{C}^{n}$ quantum system with $3\leq m \leq n$ in Ref. \cite{Fei2020}. As we know, if a set of quantum states cannot be perfectly distinguished by LOCC, its subset may not necessarily be locally indistinguishable by LOCC. It is interesting to find the minimum number of OPSs to form a completable set that cannot be perfectly distinguished in a given Hilbert space. Based on this idea, we find that a subset of the UPB constructed by Shi \emph{et al.}, which only has $2(m+n)-4$ elements, cannot be perfectly distinguished by LOCC in $\mathbb{C}^{m}\otimes\mathbb{C}^{n}$ for $3\leq m\leq n$. Most important of all, we give a novel method to construct a nonlocal set of OPSs in $\mathbb{C}^{m}\otimes\mathbb{C}^{n}$ quantum sysytem for $m\geq 3$ and $n\geq 3$ in this paper. Both the subset and the novel set have only $2(m+n)-4$ members, which is the minimum number of OPSs to form a completable set that cannot be perfectly distinguished by LOCC in a general $\mathbb{C}^{m}\otimes\mathbb{C}^{n}$ quantum system for $m\geq 3$ and $n\geq 3$ so far. On the other hand, to analyze the structures of bipartite nonlocal sets constructed by our method and Zhang \emph{et al.}'s, we give the concept of isomorphism of two nonlocal sets of OPSs. All the results improve the theory of QNWE. The rest of this paper is organized as follows. In sec. \uppercase\expandafter{\romannumeral2}
, some preliminaries that will be used in the following sections are introduced. In sec. \uppercase\expandafter{\romannumeral3}, we firstly give a subset of a UPB of Shi \emph{et al.}, then prove it cannot be perfectly distinguished by LOCC.
In sec. \uppercase\expandafter{\romannumeral4}, we firstly give our novel method to construct a nonlocal set of OPSs in $\mathbb{C}^{m}\otimes\mathbb{C}^{n}$ for $m\geq 3$ and $n\geq 3$. Then, we compare the number of elements of our nonlocal set with that of the existing ones. In sec. \uppercase\expandafter{\romannumeral5}, we propose the concept of isomorphism of two sets of bipartite OPSs. In sec. \uppercase\expandafter{\romannumeral6}, a brief conclusion is given.

\section{\label{sec:level1}Preliminaries}

\theoremstyle{remark}
\newtheorem{definition}{\indent Definition}
\newtheorem{lemma}{\indent Lemma}
\newtheorem{theorem}{\indent Theorem}
\newtheorem{corollary}{\indent Corollary}
\def\QEDclosed{\mbox{\rule[0pt]{1.3ex}{1.3ex}}}
\def\QED{\QEDclosed}
\def\proof{\indent{\em Proof}.}
\def\endproof{\hspace*{\fill}~\QED\par\endtrivlist\unskip}

Some preliminaries, which will be used in what follows, are given in this section.
\begin{definition} \cite{Xu2016,xu2016}
If a set of OPSs cannot be exactly discriminated by LOCC, we say it is locally indistinguishable or nonlocal.
\end{definition}
\begin{definition} \cite{SHalder2019,Walgate2000,Sixia2015,Zhang2016,Wang2015}
A quantum measurement is trivial if all its positive operator-valued measure elements are proportional to identity operator.
\end{definition}
\begin{lemma} (Vandermonde determinant \cite{Depart2014})
The following determinant is called Vandermonde determinant since it is firstly researched by Vandermonde.
\begin{equation}
\nonumber
\begin{split}
\left|
  \begin{array}{ccccc}
    1       &a_{1}   &a_{1}^{2} &\cdots &a_{1}^{n-1}\\
    1       &a_{2}   &a_{2}^{2} &\cdots &a_{2}^{n-1}\\
    1       &a_{3}   &a_{3}^{2} &\cdots &a_{3}^{n-1}\\
    \vdots  &\vdots  &\vdots    &\ddots &\vdots\\
    1       &a_{n}   &a_{n}^{2} &\cdots &a_{n}^{n-1}\\
  \end{array}
\right|
=\prod_{1\leq j< t\leq n}(a_{t}-a_{j}),
\end{split}
\end{equation}
where $t,\,j,\,n$ are positive integers.
\end{lemma}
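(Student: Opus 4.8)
The plan is to establish the identity by induction on the order $n$ of the determinant. The base case $n=1$ is immediate, since the determinant is the scalar $1$ and the product $\prod_{1\le j<t\le 1}(a_{t}-a_{j})$ is empty and hence equal to $1$; one may also check $n=2$ directly, where the determinant is $a_{2}-a_{1}$, as a sanity test.

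For the inductive step, suppose the claim holds for order $n-1$. First I would apply elementary column operations that leave the determinant unchanged: working from the last column toward the second, for $j=n,n-1,\dots,2$ subtract $a_{1}$ times column $j-1$ from column $j$. After these operations the first row becomes $(1,0,0,\dots,0)$, and for each $i\ge 2$ the entry in row $i$ and column $j$ (with $j\ge 2$) becomes $a_{i}^{\,j-1}-a_{1}a_{i}^{\,j-2}=a_{i}^{\,j-2}(a_{i}-a_{1})$. Expanding along the first row then reduces the $n\times n$ determinant to an $(n-1)\times(n-1)$ one whose $i$-th row (for $i=2,\dots,n$) carries the common factor $a_{i}-a_{1}$.

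Pulling this factor out of each of the $n-1$ rows gives $\prod_{i=2}^{n}(a_{i}-a_{1})$ times the Vandermonde determinant of order $n-1$ in the variables $a_{2},\dots,a_{n}$, which by the induction hypothesis equals $\prod_{2\le j<t\le n}(a_{t}-a_{j})$. Multiplying the two contributions yields $\prod_{1\le j<t\le n}(a_{t}-a_{j})$, which closes the induction. The only point demanding attention is the routine bookkeeping in the column-operation step (verifying that the first row is indeed cleared and that each lower row acquires exactly the factor $a_{i}-a_{1}$); there is no genuine obstacle here, as the result is classical. An equivalent alternative would be to view the left-hand side as a polynomial in $a_{1},\dots,a_{n}$, observe that it vanishes whenever $a_{s}=a_{r}$ for $s\neq r$ so that $\prod_{j<t}(a_{t}-a_{j})$ divides it, and then compare total degrees together with the coefficient of the monomial $a_{2}a_{3}^{2}\cdots a_{n}^{\,n-1}$ (which equals $1$ on both sides) to conclude that the two expressions coincide.
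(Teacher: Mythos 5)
Your argument is correct: the column-operation induction is the standard proof of the Vandermonde identity, and your bookkeeping (clearing the first row, extracting the factor $a_{i}-a_{1}$ from each remaining row, and invoking the order-$(n-1)$ case) is accurate, as is the alternative divisibility-and-degree argument. The paper itself offers no proof of this lemma --- it is quoted from a linear-algebra textbook as a known fact --- so there is nothing to compare against beyond noting that your proof is the classical one such a reference would contain.
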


\begin{lemma} (Kramer's rule \cite{Department2014}) A system of equations
\begin{equation}
\nonumber
\left\{
\begin{aligned}
\alpha_{11}x_{1}+\alpha_{12}x_{2}+\cdots+\alpha_{1n}x_{n}=\beta_{1}\\
\alpha_{21}x_{1}+\alpha_{22}x_{2}+\cdots+\alpha_{2n}x_{n}=\beta_{2}\\
\vdots\qquad \qquad \qquad\quad \\
\alpha_{n1}x_{1}+\alpha_{n2}x_{2}+\cdots+\alpha_{nn}x_{n}=\beta_{n}\\
\end{aligned}
\right.
\end{equation}
has a unique solution if its coefficient determinant
\begin{equation}
\nonumber
\begin{split}
\left|
  \begin{array}{cccc}
    \alpha_{11}   &\alpha_{12}    &\cdots   &\alpha_{1n}\\
    \alpha_{21}   &\alpha_{22}    &\cdots   &\alpha_{2n}\\
    \vdots        &\vdots         &\ddots   &\vdots\\
    \alpha_{n1}   &\alpha_{n2}    &\cdots   &\alpha_{nn}
  \end{array}
\right|
\neq 0,
\end{split}
\end{equation}
where $\alpha_{\lambda\mu}$ and $\beta_{\lambda}$ are plurals for $\lambda=1,$ $2,$ $\cdots,$ $n$ and $\mu=1$, $2,$ $\cdots,$ $n.$
\end{lemma}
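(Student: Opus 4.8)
The plan is to prove existence and uniqueness separately, using nothing beyond elementary determinant theory over $\mathbb{C}$ (the entries $\alpha_{\lambda\mu},\beta_{\lambda}$ being complex numbers). Write $A=(\alpha_{\lambda\mu})_{n\times n}$ for the coefficient matrix, $\mathbf{x}=(x_1,\dots,x_n)^{\mathsf T}$, and $\mathbf{b}=(\beta_1,\dots,\beta_n)^{\mathsf T}$, so the system is $A\mathbf{x}=\mathbf{b}$. The first step is to record the adjugate identity: letting $C_{\lambda\mu}$ denote the $(\lambda,\mu)$ cofactor of $A$ and $\mathrm{adj}(A)=(C_{\mu\lambda})_{n\times n}$ its adjugate, one has $A\,\mathrm{adj}(A)=\mathrm{adj}(A)\,A=\det(A)\,I_n$. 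This holds because the $(\lambda,\mu)$ entry of $A\,\mathrm{adj}(A)$ is $\sum_{\nu=1}^{n}\alpha_{\lambda\nu}C_{\mu\nu}$, which equals $\det(A)$ when $\lambda=\mu$ by Laplace expansion along row $\lambda$, and equals $0$ when $\lambda\neq\mu$ since it is then the Laplace expansion of the determinant of a matrix having row $\lambda$ duplicated in position $\mu$.

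Since $\det(A)\neq 0$ by hypothesis, the identity above shows $A$ is invertible with $A^{-1}=\det(A)^{-1}\mathrm{adj}(A)$. For existence, I would simply set $\mathbf{x}:=A^{-1}\mathbf{b}$; then $A\mathbf{x}=AA^{-1}\mathbf{b}=\mathbf{b}$. Expanding $\det(A)^{-1}\mathrm{adj}(A)\mathbf{b}$ entrywise identifies the $i$-th coordinate as $\det(A)^{-1}\sum_{\nu}C_{\nu i}\beta_{\nu}=\det(A_i)/\det(A)$, where $A_i$ is $A$ with its $i$-th column replaced by $\mathbf{b}$ --- this is the explicit Kramer formula, although for the stated claim only solvability is needed. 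For uniqueness, suppose $\mathbf{x}'$ and $\mathbf{x}''$ are two solutions; then $A(\mathbf{x}'-\mathbf{x}'')=\mathbf{b}-\mathbf{b}=\mathbf{0}$, and left-multiplying by $A^{-1}$ gives $\mathbf{x}'=\mathbf{x}''$.

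There is really no serious obstacle here; the result is classical, and the only point worth stating carefully is the cofactor identity $A\,\mathrm{adj}(A)=\det(A)I_n$ together with its consequence that $\det(A)\neq 0$ forces invertibility. If one prefers to bypass the adjugate entirely, an equally quick route is Gaussian elimination: $\det(A)\neq 0$ forces the reduced row-echelon form of $A$ to be $I_n$, whence $A\mathbf{x}=\mathbf{b}$ is row-equivalent to a system of the form $\mathbf{x}=\mathbf{c}$ for a uniquely determined vector $\mathbf{c}$, yielding existence and uniqueness simultaneously. Either way the argument is a few lines once the basic determinant facts are invoked, and the Vandermonde evaluation of Lemma~1 will later supply the concrete nonvanishing determinants to which this lemma is applied.
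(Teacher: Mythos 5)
Your proof is correct and complete: the adjugate identity $A\,\mathrm{adj}(A)=\det(A)I_n$ gives invertibility from $\det(A)\neq 0$, and existence plus uniqueness then follow immediately. The paper offers no proof of this lemma at all --- it is stated as a known fact with a citation to a linear algebra textbook --- and your argument is precisely the standard textbook derivation, so there is nothing to reconcile; the only substantive role the lemma plays in the paper is the one you note at the end, namely being applied to the nonvanishing Vandermonde determinants of Lemma~1.
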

\begin{lemma} \cite{Gai2007}
If $\omega=e^{\frac{2\pi \sqrt{-1}}{d}}$, we have
$$\omega^{t} \neq \omega^{j}$$
and $$(\omega^{p})^{d}=1$$ for $1\leq t<j\leq d$ and $1\leq p\leq d$, where $t$, $j$, $p$ and $d$ are integers; and $d\geq 2$.
\end{lemma}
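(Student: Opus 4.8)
The plan is to unwind the definition $\omega = e^{2\pi\sqrt{-1}/d}$ and appeal only to elementary properties of the complex exponential; nothing beyond the fact that $e^{2\pi\sqrt{-1}x}=1$ precisely when $x\in\mathbb{Z}$ is needed, so the two claims can be handled separately and directly.

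First I would dispose of the second assertion. Since $(\omega^{p})^{d}=\omega^{pd}=(\omega^{d})^{p}$ and $\omega^{d}=e^{2\pi\sqrt{-1}}=1$, we get $(\omega^{p})^{d}=1^{p}=1$ for every integer $p$, in particular for $1\leq p\leq d$; note the hypothesis $d\geq 2$ is not even required here. For the first assertion, I would argue by contradiction: suppose $\omega^{t}=\omega^{j}$ for integers with $1\leq t<j\leq d$. Dividing by the nonzero number $\omega^{t}$ gives $\omega^{j-t}=1$, i.e. $e^{2\pi\sqrt{-1}(j-t)/d}=1$, which forces $d\mid(j-t)$. But $1\leq t<j\leq d$ yields $1\leq j-t\leq d-1$, so $j-t$ is a positive integer strictly smaller than $d$ and cannot be a multiple of $d$ — a contradiction. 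Hence $\omega^{t}\neq\omega^{j}$.

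The only point that requires any care — and it is bookkeeping rather than a genuine obstacle — is controlling the range of the exponent difference so that $1\leq j-t\leq d-1$, which is exactly where the constraint $1\leq t<j\leq d$ is used; with that in hand, both statements follow at once from the periodicity of $e^{2\pi\sqrt{-1}x}$.
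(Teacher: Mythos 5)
Your proof is correct: the paper states this lemma as a cited textbook fact without providing its own proof, and your argument (reducing the first claim to $\omega^{j-t}=1$ with $1\leq j-t\leq d-1$, and the second to $\omega^{d}=e^{2\pi\sqrt{-1}}=1$) is the standard elementary verification. Nothing is missing.
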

\section{Nonlocal subset of Shi et al.'s UPB}
In \cite{Fei2020}, Shi \emph{et al.} gave a more general method to construct a UPB in $\mathbb{C}^{m}\otimes\mathbb{C}^{n}$ quantum system with $3\leq m \leq n$. As we know, if a set of quantum states cannot be perfectly distinguished by LOCC, its subset is not necessarily nonlocal. However, we found a subset of the UPB is still LOCC indistinguishable. The subset only has $2(m+n)-4$ members, which is the minimum number of OPSs to form a completable set that cannot be perfectly distinguished by LOCC in $\mathbb{C}^{m}\otimes\mathbb{C}^{n}$ for $m\geq 3$ and $n\geq 3$ so far.

To make readers understand the structure and the nonlocality of the subset of Shi \emph{et al.}'s UPB, we first give a special case, \emph{i.e.}, Theorem 1. It should be noted that all the product states in this paper are not normalized for convenience.

\begin{figure}
\small
\centering
\includegraphics[scale=0.6]{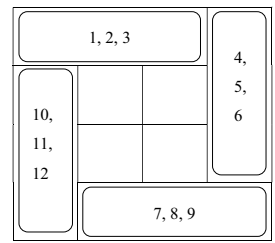}
\caption{Nonlocal subset of Shi \emph{et al.}'s UPB in $\mathbb{C}^{4} \otimes \mathbb{C}^{4}$ quantum system. Here $j$ denotes the product states $|\phi_{j}\rangle$ for $j=1,\,2,\,\cdots,\,12.$}
\end{figure}

\begin{theorem}
In $\mathbb{C}^{4} \otimes \mathbb{C}^{4}$ quantum system, the states in Eqs. (1) construct a nonlocal set of OPSs, \emph{i.e.}, these states cannot be perfectly distinguished by LOCC.
\begin{equation}
\begin{aligned}
|\phi_{1}\rangle=|0\rangle_{A}(|0\rangle+|1\rangle+|2\rangle)_{B},\\
|\phi_{2}\rangle=|0\rangle_{A}(|0\rangle+\omega|1\rangle+\omega^{2}|2\rangle)_{B},\\
|\phi_{3}\rangle=|0\rangle_{A}[|0\rangle+\omega^{2}|1\rangle+(\omega^{2})^{2}|2\rangle]_{B},\\ |\phi_{4}\rangle=(|0\rangle+|1\rangle+|2\rangle)_{A}|3\rangle_{B},\\
|\phi_{5}\rangle=(|0\rangle+\omega|1\rangle+\omega^{2}|2\rangle)_{A}|3\rangle_{B},\\
|\phi_{6}\rangle=[|0\rangle+\omega^{2}|1\rangle+(\omega^{2})^{2}|2\rangle]_{A}|3\rangle_{B},\\
|\phi_{7}\rangle=|3\rangle_{A}(|1\rangle+|2\rangle+|3\rangle)_{B},\\
|\phi_{8}\rangle=|3\rangle_{A}(|1\rangle+\omega|2\rangle+\omega^{2}|3\rangle)_{B},\\
|\phi_{9}\rangle=|3\rangle_{A}[|1\rangle+\omega^{2}|2\rangle+(\omega^{2})^{2}|3\rangle]_{B},\\
|\phi_{10}\rangle=(|1\rangle+|2\rangle+|3\rangle)_{A}|0\rangle_{B},\\
|\phi_{11}\rangle=(|1\rangle+\omega|2\rangle+\omega^{2}|3\rangle)_{A}|0\rangle_{B},\\
|\phi_{12}\rangle=[|1\rangle+\omega^{2}|2\rangle+(\omega^{2})^{2}|3\rangle]_{A}|0\rangle_{B},
\end{aligned}
\end{equation}
where $\omega=e^{\frac{2\pi \sqrt{-1}}{3}}$. FIG. 1 shows a clear and intuitive structure of these states.
\end{theorem}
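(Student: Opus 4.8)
The plan is to invoke the standard criterion for local indistinguishability of orthogonal product states: the set in Eq.~(1) is nonlocal provided that, for \emph{each} party, every orthogonality-preserving POVM consists only of elements proportional to the identity --- i.e., is trivial in the sense of Definition~2 --- since any LOCC protocol perfectly discriminating an orthogonal set must begin with such a (necessarily orthogonality-preserving) measurement by one of the parties \cite{Walgate2002}. Hence it suffices to prove: if $E=(e_{st})_{s,t=0}^{3}$ is positive semidefinite and $E\otimes I$ (respectively $I\otimes E$) maps the twelve states to a pairwise orthogonal set, then $E=cI$ for some $c\ge0$.

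First I would fix Alice's candidate POVM element $E$ and use that, writing $|\phi_k\rangle=|a_k\rangle_A|b_k\rangle_B$, the orthogonality-preserving requirement $\langle\phi_i|(E\otimes I)|\phi_j\rangle=\langle a_i|E|a_j\rangle\langle b_i|b_j\rangle=0$ constrains $E$ only for those pairs with $\langle b_i|b_j\rangle\neq0$. The twelve states fall into four blocks --- $\{|\phi_1\rangle,|\phi_2\rangle,|\phi_3\rangle\}$, $\{|\phi_4\rangle,|\phi_5\rangle,|\phi_6\rangle\}$, $\{|\phi_7\rangle,|\phi_8\rangle,|\phi_9\rangle\}$, $\{|\phi_{10}\rangle,|\phi_{11}\rangle,|\phi_{12}\rangle\}$ --- each being a discrete-Fourier (Vandermonde) triple on three consecutive basis levels of one party, tensored with a single fixed basis vector of the other. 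Using Lemmas~1 and~3 one verifies the within-block orthogonality and the non-vanishing of the relevant cross-block overlaps, and then lists exactly which pairs have non-orthogonal Bob parts. The surviving constraints are of two kinds. From the pairs \emph{inside} the blocks $\{|\phi_4\rangle,|\phi_5\rangle,|\phi_6\rangle\}$ and $\{|\phi_{10}\rangle,|\phi_{11}\rangle,|\phi_{12}\rangle\}$ (whose Bob parts are the coincident vectors $|3\rangle$ and $|0\rangle$), one forces the $3\times3$ principal submatrices of $E$ on the index sets $\{0,1,2\}$ and $\{1,2,3\}$ to be diagonal in the (unnormalized) Fourier basis $\{(1,1,1),(1,\omega,\omega^2),(1,\omega^2,\omega)\}$. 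From the cross pairs between $\{|\phi_1\rangle,|\phi_2\rangle,|\phi_3\rangle\}$ and $\{|\phi_7\rangle,|\phi_8\rangle,|\phi_9\rangle\}$, between $\{|\phi_1\rangle,|\phi_2\rangle,|\phi_3\rangle\}$ and $\{|\phi_{10}\rangle,|\phi_{11}\rangle,|\phi_{12}\rangle\}$, and between $\{|\phi_4\rangle,|\phi_5\rangle,|\phi_6\rangle\}$ and $\{|\phi_7\rangle,|\phi_8\rangle,|\phi_9\rangle\}$, one obtains $\langle0|E|3\rangle=0$, $\langle0|E|x\rangle=0$ for every $x\in\mathrm{span}\{|1\rangle,|2\rangle,|3\rangle\}$, and $\langle x|E|3\rangle=0$ for every $x\in\mathrm{span}\{|0\rangle,|1\rangle,|2\rangle\}$; that is, $e_{01}=e_{02}=e_{03}=e_{13}=e_{23}=0$.

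Next I would combine the two kinds of constraints. A $3\times3$ Hermitian matrix diagonal in the Fourier basis above is automatically circulant, so the $\{0,1,2\}$-block of $E$ satisfies $e_{00}=e_{11}=e_{22}$ with $e_{01}=e_{12}=e_{20}$ and $e_{02}=e_{10}=e_{21}$, while the $\{1,2,3\}$-block satisfies $e_{11}=e_{22}=e_{33}$ with the analogous identifications among its off-diagonal entries; solving these small systems (whose coefficient matrices are Vandermonde) is where Lemma~1 and Lemma~2 (Kramer's rule) are used. Feeding in $e_{01}=e_{02}=e_{13}=e_{23}=0$ then forces $e_{12}=0$ as well and $e_{00}=e_{11}=e_{22}=e_{33}$, so $E=e_{00}I$: every orthogonality-preserving POVM element for Alice is trivial.

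Finally I would note that interchanging the roles of $A$ and $B$ yields the identical system --- the pattern of non-orthogonal \emph{Alice} parts again forces Bob's candidate $E$ to be Fourier-diagonal on the $\{0,1,2\}$- and $\{1,2,3\}$-submatrices and forces the same entries $e_{01},e_{02},e_{03},e_{13},e_{23}$ to vanish, so $E=cI$ for Bob too --- and conclude, by the criterion above, that the twelve states of Eq.~(1) cannot be perfectly distinguished by LOCC. The main obstacle is essentially bookkeeping: sorting the $\binom{12}{2}$ pairs according to which local parts are non-orthogonal and checking that the resulting constraints on a single $E$ collapse to $E\propto I$. The crux of that is the chain ``diagonal in the Fourier basis $\Rightarrow$ circulant $\Rightarrow$ (once the corner entries vanish) scalar,'' which is precisely where the Vandermonde determinant and Kramer's rule advertised in the abstract do the work.
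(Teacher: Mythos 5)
Your proposal is correct and follows essentially the same route as the paper's proof: the same orthogonality-preserving-POVM criterion, the same selection of state pairs (the cross pairs forcing the row/column entries $e_{01},e_{02},e_{03},e_{13},e_{23}$ and their transposes to vanish, and the within-block pairs of $\{|\phi_4\rangle,|\phi_5\rangle,|\phi_6\rangle\}$ and $\{|\phi_{10}\rangle,|\phi_{11}\rangle,|\phi_{12}\rangle\}$ constraining the $\{0,1,2\}$ and $\{1,2,3\}$ principal submatrices), and the same Vandermonde/Fourier mechanism. Your observation that ``diagonal in the Fourier basis $\Rightarrow$ circulant'' is a tidy repackaging of the paper's explicit Cramer's-rule solution of Eqs.~(10)--(19) and (20)--(29), but it is not a different argument.
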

\begin{proof}
It should be noted that the proof method is originally given in Refs. \cite{Walgate2002,
Sixia2015}. To discriminate one of the 12 states, one party has to start with a nontrivial measurement of preserving orthogonality, \emph{i.e.}, the post-measurement states should be mutually orthogonal and  not all $M_{k}^{\dag}M_{k}$ are proportional to identity operator.

Suppose that Alice firstly starts with a set of general $4\times4$ positive operator-valued measurement (POVM) elements $\{M_{k}^{\dag}M_{k}: k=1,2,\cdots,l\}$, where
  \begin{equation}
\begin{split}
M_{k}^{\dag}M_{k}=
\left[
  \begin{array}{cccc}
    a_{00}^{k}&a_{01}^{k} &a_{02}^{k} &a_{03}^{k}\\
    a_{10}^{k}&a_{11}^{k} &a_{12}^{k} &a_{13}^{k}\\
    a_{20}^{k}&a_{21}^{k} &a_{22}^{k} &a_{23}^{k}\\
    a_{30}^{k}&a_{31}^{k} &a_{32}^{k} &a_{33}^{k}\\
  \end{array}
\right]
\end{split}\nonumber
\end{equation}
in the basis $\{|0\rangle,\,|1\rangle,\,|2\rangle,\,|3\rangle\}$. The post-measurement states $\{(M_{k}\otimes I_{4\times 4})|\phi_{j}\rangle:$ $j=1,$ $2$, $\cdots$, $12\}$ should preserve their orthogonality.

Because $(M_{k}\otimes I_{4\times4})|\phi_{1}\rangle$ is orthogonal to $(M_{k}\otimes I_{4\times4})|\phi_{10}\rangle$, $(M_{k}\otimes I_{4\times4})|\phi_{11}\rangle$ and $(M_{k}\otimes I_{4\times4})|\phi_{12}\rangle$, \emph{i.e.},
\begin{equation}
\nonumber
\left\{
\begin{aligned}
\langle\phi_{1}|(M_{k}^{\dag}M_{k}\otimes I_{4\times 4})|\phi_{10}\rangle=0\\
\langle\phi_{1}|(M_{k}^{\dag}M_{k}\otimes I_{4\times 4})|\phi_{11}\rangle=0\\
\langle\phi_{1}|(M_{k}^{\dag}M_{k}\otimes I_{4\times 4})|\phi_{12}\rangle=0
\end{aligned}
\right.
\end{equation}
and
\begin{equation}
\nonumber
\left\{
\begin{aligned}
\langle\phi_{10}|(M_{k}^{\dag}M_{k}\otimes I_{4\times 4})|\phi_{1}\rangle=0\\
\langle\phi_{11}|(M_{k}^{\dag}M_{k}\otimes I_{4\times 4})|\phi_{1}\rangle=0\\
\langle\phi_{12}|(M_{k}^{\dag}M_{k}\otimes I_{4\times 4})|\phi_{1}\rangle=0
\end{aligned}
\right.,
\end{equation}
we get two systems of linear equations
\begin{equation}
\left\{
\begin{aligned}
a_{01}^{k}+a_{02}^{k}+a_{03}^{k}=0\qquad\quad\,\,\\
a_{01}^{k}+\omega a_{02}^{k}+\omega^{2}a_{03}^{k}=0\quad\,\,\,\,\\
a_{01}^{k}+\omega^{2}a_{02}^{k}+(\omega^{2})^{2}a_{03}^{k}=0
\end{aligned}
\right.
\end{equation}
and
\begin{equation}
\left\{
\begin{aligned}
a_{10}^{k}+a_{20}^{k}+a_{30}^{k}=0\qquad\quad\,\,\\
a_{10}^{k}+\overline{\omega} a_{20}^{k}+\overline{\omega}^{2}a_{30}^{k}=0\quad\,\,\,\,\\
a_{10}^{k}+\overline{\omega}^{2}a_{20}^{k}+(\overline{\omega}^{2})^{2}a_{30}^{k}=0
\end{aligned}
\right.,
\end{equation}
where $\overline{\omega}$ is the conjugate complex number of $\omega$.

By Lemma 1-3, we get the unique solution of Eqs. (2),
\begin{eqnarray}
\begin{aligned}
a_{01}^{k}=a_{02}^{k}=a_{03}^{k}=0\\
\end{aligned}
\end{eqnarray}
and the unique solution of Eqs. (3),
\begin{eqnarray}
\begin{aligned}
a_{10}^{k}=a_{20}^{k}=a_{30}^{k}=0.
\end{aligned}
\end{eqnarray}
Similarly, we can get two systems of linear equations
\begin{equation}
\left\{
\begin{aligned}
a_{30}^{k}+a_{31}^{k}+a_{32}^{k}=0\qquad\quad\,\,\\
a_{30}^{k}+\omega a_{31}^{k}+\omega^{2}a_{32}^{k}=0\quad\,\,\,\,\\
a_{30}^{k}+\omega^{2}a_{31}^{k}+(\omega^{2})^{2}a_{32}^{k}=0
\end{aligned}
\right.
\end{equation}
and
\begin{equation}
\left\{
\begin{aligned}
a_{03}^{k}+a_{13}^{k}+a_{23}^{k}=0\qquad\quad\,\,\\
a_{03}^{k}+\overline{\omega} a_{13}^{k}+\overline{\omega}^{2}a_{23}^{k}=0\quad\,\,\,\,\\
a_{03}^{k}+\overline{\omega}^{2}a_{13}^{k}+(\overline{\omega}^{2})^{2}a_{23}^{k}=0
\end{aligned}
\right.
\end{equation}
since $(M_{k}\otimes I_{4\times4})|\phi_{7}\rangle$ is orthogonal to $(M_{k}\otimes I_{4\times4})|\phi_{4}\rangle$, $(M_{k}\otimes I_{4\times4})|\phi_{5}\rangle$ and $(M_{k}\otimes I_{4\times4})|\phi_{6}\rangle$. By Lemma 1-3, we get the unique solution of Eqs. (6),
\begin{equation}
\begin{aligned}
a_{30}^{k}=a_{31}^{k}=a_{32}^{k}=0
\end{aligned}
\end{equation} and the unique solution of Eqs. (7),
\begin{equation}
\begin{aligned}
a_{03}^{k}=a_{13}^{k}=a_{23}^{k}=0.
\end{aligned}
\end{equation}

Because these three states $(M_{k}\otimes I_{4\times4})|\phi_{4}\rangle$, $(M_{k}\otimes I_{4\times4})|\phi_{5}\rangle$ and $(M_{k}\otimes I_{4\times4})|\phi_{6}\rangle$ are mutually orthogonal, we have
\begin{equation}
\nonumber
\left\{
\begin{aligned}
\langle\phi_{4}|(M_{k}^{\dag}M_{k}\otimes I_{4\times 4})|\phi_{5}\rangle=0\\
\langle\phi_{4}|(M_{k}^{\dag}M_{k}\otimes I_{4\times 4})|\phi_{6}\rangle=0
\end{aligned}
\right.,
\end{equation}
\begin{equation}
\nonumber
\left\{
\begin{aligned}
\langle\phi_{5}|(M_{k}^{\dag}M_{k}\otimes I_{4\times 4})|\phi_{4}\rangle=0\\
\langle\phi_{5}|(M_{k}^{\dag}M_{k}\otimes I_{4\times 4})|\phi_{6}\rangle=0
\end{aligned}
\right.
\end{equation}
and
\begin{equation}
\nonumber
\left\{
\begin{aligned}
\langle\phi_{6}|(M_{k}^{\dag}M_{k}\otimes I_{4\times 4})|\phi_{4}\rangle=0\\
\langle\phi_{6}|(M_{k}^{\dag}M_{k}\otimes I_{4\times 4})|\phi_{5}\rangle=0
\end{aligned}
\right..
\end{equation}
That is,
\begin{equation}
\left\{
\begin{aligned}
\sum_{p=1}^{2}(\omega^{p}\sum_{j=0}^{2}a_{jp}^{k})
=-\sum_{j=0}^{2}a_{j0}^{k}\quad\,\,\\
\sum_{p=1}^{2}[(\omega^{2})^{p}\sum_{j=0}^{2}a_{jp}^{k}]
=-\sum_{j=0}^{2}a_{j0}^{k}
\end{aligned}
\right.,
\end{equation}
\begin{equation}
\left\{
\begin{aligned}
\sum_{p=1}^{2}\sum_{j=0}^{2}\overline{\omega}^{j}a_{jp}^{k}
=-\sum_{j=0}^{2}\overline{\omega}^{j}a_{j0}^{k}\qquad\quad\\
\sum_{p=1}^{2}[(\omega^{2})^{p}\sum_{j=0}^{2}\overline{\omega}^{j}a_{jp}^{k}]
=-\sum_{j=0}^{2}\overline{\omega}^{j}a_{j0}^{k}\\
\end{aligned}
\right.,
\end{equation}
and
\begin{equation}
\left\{
\begin{aligned}
\sum_{p=1}^{2}\sum_{j=0}^{2}(\overline{\omega}^{2})^{j}a_{jp}^{k}
=-\sum_{j=0}^{2}(\overline{\omega}^{2})^{j}a_{j0}^{k}\quad\,\,\,\\
\sum_{p=1}^{2}[\omega^{p}\sum_{j=0}^{2}(\overline{\omega}^{2})^{j}a_{jp}^{k}]
=-\sum_{j=0}^{2}(\overline{\omega}^{2})^{j}a_{j0}^{k}\\
\end{aligned}
\right.,
\end{equation}
where $\overline{\omega}$ is the conjugate complex number of $\omega$.

For simplicity, we denote the coefficient determinants of Eqs. (10), (11) and (12) as $D_{1}$, $D_{2}$ and $D_{3}$, respectively. Since
\begin{equation}
\begin{split}
\nonumber
D_{1}=
\left|
  \begin{array}{cc}
   \omega        &\omega^{2}        \\
   \omega^{2}    &(\omega^{2})^{2}   \\
  \end{array}
\right|
=\omega^{3}\left|
  \begin{array}{cc}
   1    &\omega        \\
   1    &\omega^{2}   \\
  \end{array}
\right|
=\omega^{2}-\omega \neq 0,\\
\end{split}
\end{equation}
\begin{equation}
\nonumber
\begin{split}
D_{2}=
\left|
  \begin{array}{cc}
   1             &1        \\
   \omega^{2}    &(\omega^{2})^{2}   \\
  \end{array}
\right|
=(\omega^{2})^{2}-\omega^{2} \neq 0,
\end{split}\qquad\qquad
\end{equation}
\begin{equation}
\nonumber
\begin{split}
D_{3}=
\left|
  \begin{array}{cc}
   1        &1        \\
   \omega   &\omega^{2} \\
  \end{array}
\right|
=\omega^{2}-\omega \neq 0,
\end{split}\qquad\qquad\qquad\qquad
\end{equation}
we obtain the unique solutions of Eqs. (10), (11) and (12), respectively, \emph{i.e.},
\begin{equation}
\left\{
\begin{aligned}
\sum_{j=0}^{2}a_{j1}^{k}=\sum_{j=0}^{2}a_{j0}^{k}\\
\sum_{j=0}^{2}a_{j2}^{k}=\sum_{j=0}^{2}a_{j0}^{k}
\end{aligned}
\right.,
\end{equation}
\begin{equation}
\left\{
\begin{aligned}
\sum_{j=0}^{2}\overline{\omega}^{j}a_{j1}^{k}=
\overline{\omega}\sum_{j=0}^{2}\overline{\omega}^{j}a_{j0}^{k}\\
\sum_{j=0}^{2}\overline{\omega}^{j}a_{j2}^{k}=
\overline{\omega}^{2}\sum_{j=0}^{2}\overline{\omega}^{j}a_{j0}^{k}\\
\end{aligned}
\right.,\quad
\end{equation}
and
\begin{equation}
\left\{
\begin{aligned}
\sum_{j=0}^{2}(\overline{\omega}^{2})^{j}a_{j1}^{k}=\overline{\omega}^{2}
\sum_{j=0}^{2}(\overline{\omega}^{2})^{j}a_{j0}^{k}\quad\\
\sum_{j=0}^{2}(\overline{\omega}^{2})^{j}a_{j2}^{k}=(\overline{\omega}^{2})^{2}
\sum_{j=0}^{2}(\overline{\omega}^{2})^{j}a_{j0}^{k}\\
\end{aligned}
\right..
\end{equation}
By Eqs. (5), (13), (14) and (15), we get
\begin{equation}
\left\{
\begin{aligned}
\sum_{j=0}^{2}a_{j1}^{k}=a_{00}^{k}\qquad\quad\\
\sum_{j=0}^{2}\overline{\omega}^{j}a_{j1}^{k}=\overline{\omega} a_{00}^{k}\quad\,\,\\
\sum_{j=0}^{2}(\overline{\omega}^{2})^{j}a_{j1}^{k}=\overline{\omega}^{2}a_{00}^{k}
\end{aligned}
\right.
\end{equation}
and
\begin{equation}
\left\{
\begin{aligned}
\sum_{j=0}^{2}a_{j2}^{k}=a_{00}^{k}\qquad\qquad\\
\sum_{j=0}^{2}\overline{\omega}^{j}a_{j2}^{k}=\overline{\omega}^{2} a_{00}^{k}\qquad\\
\sum_{j=0}^{2}(\overline{\omega}^{2})^{j}a_{j2}^{k}=(\overline{\omega}^{2})^{2}a_{00}^{k}
\end{aligned}
\right..
\end{equation}
By Lemma 1-3, we easily get the unique solutions of Eqs. (16) and Eqs. (17), respectively, \emph{i.e.},
\begin{equation}
\left\{
\begin{aligned}
a_{11}^{k}=a_{00}^{k}\qquad\\
a_{01}^{k}=a_{21}^{k}=0
\end{aligned}
\right.
\end{equation}
and
\begin{equation}
\left\{
\begin{aligned}
a_{22}^{k}=a_{00}^{k}\quad\,\,\,\,\,\\
a_{02}^{k}=a_{12}^{k}=0
\end{aligned}
\right..
\end{equation}

Similarly, because the three product states $\{(M_{k}\otimes I_{4\times4})|\phi_{10}\rangle$, $(M_{k}\otimes I_{4\times4})|\phi_{11}\rangle$ and $(M_{k}\otimes I_{4\times4})|\phi_{12}\rangle\}$ are mutually orthogonal, we have
\begin{equation}
\left\{
\begin{aligned}
\sum_{p=0}^{1}[\omega^{p}\sum_{j=1}^{3}a_{j(p+1)}^{k}]
=-\omega^{2}\sum_{j=1}^{3}a_{j3}^{k}\qquad\,\,\\
\sum_{p=0}^{1}[(\omega^{2})^{p}\sum_{j=1}^{3}a_{j(p+1)}^{k}]
=-(\omega^{2})^{2}\sum_{j=1}^{3}a_{j3}^{k}\\
\end{aligned}
\right.,
\end{equation}
\begin{equation}
\left\{
\begin{aligned}
\sum_{p=0}^{1}\sum_{j=1}^{3}\overline{\omega}^{j-1}a_{j(p+1)}^{k}
=-\sum_{j=1}^{3}\overline{\omega}^{j-1}a_{j3}^{k}\qquad\qquad\quad\,\\
\sum_{p=0}^{1}[(\omega^{2})^{p}\sum_{j=1}^{3}\overline{\omega}^{j-1}a_{j(p+1)}^{k}]
=-(\omega^{2})^{2}\sum_{j=1}^{3}\overline{\omega}^{j-1}a_{j3}^{k}\\
\end{aligned}
\right.
\end{equation}
and
\begin{equation}
\left\{
\begin{aligned}
\sum_{p=0}^{1}\sum_{j=1}^{3}(\overline{\omega}^{2})^{j-1}a_{j(p+1)}^{k}
=-\sum_{j=1}^{3}(\overline{\omega}^{2})^{j-1}a_{j3}^{k}\qquad\,\,\,\,\\
\sum_{p=0}^{1}[\omega^{p}\sum_{j=1}^{3}(\overline{\omega}^{2})^{j-1}a_{j(p+1)}^{k}]
=-\omega^{2}\sum_{j=1}^{3}(\overline{\omega}^{2})^{j-1}a_{j3}^{k}\\
\end{aligned}
\right..
\end{equation}
For simplicity, we denote the coefficient determinants of Eqs. (20), (21) and (22) as $D_{4}$, $D_{5}$ and $D_{6}$, respectively. Since
\begin{equation}
\begin{split}
\nonumber
D_{4}=\left|
  \begin{array}{cc}
   1   &\omega        \\
   1   &\omega^{2}   \\
  \end{array}
\right| \neq 0,
\end{split}
\end{equation}
\begin{equation}
\nonumber
\begin{split}
D_{5}=\left|
  \begin{array}{cc}
   1        &1        \\
   1    &\omega^{2}   \\
  \end{array}
\right| \neq 0,
\end{split}
\end{equation}
\begin{equation}
\nonumber
\begin{split}
D_{6}=\left|
  \begin{array}{cc}
   1    &1       \\
   1    &\omega   \\
  \end{array}
\right|\neq 0,
\end{split}
\end{equation}
we can get the unique solutions of Eqs. (20), (21) and (22), respectively, \emph{i.e.},
\begin{equation}
\left\{
\begin{aligned}
\sum_{j=1}^{3}a_{j1}^{k}=\sum_{j=1}^{3}a_{j3}^{k}\\
\sum_{j=1}^{3}a_{j2}^{k}=\sum_{j=1}^{3}a_{j3}^{k}
\end{aligned}
\right.,
\end{equation}
\begin{equation}
\left\{
\begin{aligned}
\sum_{j=1}^{3}\overline{\omega}^{j-1}a_{j1}^{k}=
\omega^{2}\sum_{j=1}^{3}\overline{\omega}^{j-1}a_{j3}^{k}\\
\sum_{j=1}^{3}\overline{\omega}^{j-1}a_{j2}^{k}=
\omega\sum_{j=1}^{3}\overline{\omega}^{j-1}a_{j3}^{k}\,\,\\
\end{aligned}
\right.,
\end{equation}
and
\begin{equation}
\left\{
\begin{aligned}
\sum_{j=1}^{3}(\overline{\omega}^{2})^{j-1}a_{j1}^{k}=(\omega^{2})^{2}
\sum_{j=1}^{3}(\overline{\omega}^{2})^{j-1}a_{j3}^{k}\\
\sum_{j=1}^{3}(\overline{\omega}^{2})^{j-1}a_{j2}^{k}=\omega^{2}
\sum_{j=1}^{3}(\overline{\omega}^{2})^{j-1}a_{j3}^{k}\quad\\
\end{aligned}
\right..
\end{equation}
By Eqs. (9), (23), (24) and (25), we can get
\begin{equation}
\left\{
\begin{aligned}
\sum_{j=1}^{3}a_{j1}^{k}=a_{33}^{k}\qquad\quad\\
\sum_{j=1}^{3}\overline{\omega}^{j-1}a_{j1}^{k}= a_{33}^{k}\quad\\
\sum_{j=1}^{3}(\overline{\omega}^{2})^{j-1}a_{j1}^{k}=a_{33}^{k}
\end{aligned}
\right.
\end{equation}
and
\begin{equation}
\left\{
\begin{aligned}
\sum_{j=1}^{3}a_{j2}^{k}=a_{33}^{k}\qquad\quad\quad\\
\sum_{j=1}^{3}\overline{\omega}^{j-1}a_{j2}^{k}=\overline{\omega} a_{33}^{k}\quad\\
\sum_{j=1}^{3}(\overline{\omega}^{2})^{j-1}a_{j2}^{k}=\overline{\omega}^{2}a_{33}^{k}
\end{aligned}
\right..
\end{equation}
By Lemma 1-3, we can immediately get the unique solutions of Eqs. (26) and Eqs. (27),  \emph{i.e.},
\begin{equation}
\left\{
\begin{aligned}
a_{11}^{k}=a_{33}^{k}\quad\quad\\
a_{21}^{k}=a_{31}^{k}=0
\end{aligned}
\right.
\end{equation}
and
\begin{equation}
\left\{
\begin{aligned}
a_{22}^{k}=a_{33}^{k}\quad\quad\\
a_{12}^{k}=a_{32}^{k}=0
\end{aligned}
\right..
\end{equation}

By Eqs. (4), (5), (8), (9), (18), (19), (28) and (29), we have
\begin{equation}
\begin{split}
M_{k}^{\dag}M_{k}=
\left[
  \begin{array}{cccc}
    a_{33}^{k}   &0          &0            &0\\
    0            &a_{33}^{k} &0            &0\\
    0            &0          &a_{33}^{k}   &0\\
    0            &0          &0            &a_{33}^{k}\\
  \end{array}
\right]
\end{split}\nonumber
\end{equation}
for $k=1,2,\cdots,l.$ This means that all the POVM elements of Alice are proportional to identity matrix. Thus Alice cannot start with a nontrivial measurement to keep the orthogonality of the post-measurement states.

On the other hand, Bob faces the similar circumstance as Alice does since the set of these twelve states has a symmetrical structure. Therefore, these states cannot be exactly distinguished by using only LOCC. This completes the proof.
\end{proof}
\begin{figure}
\small
\centering
\includegraphics[scale=0.6]{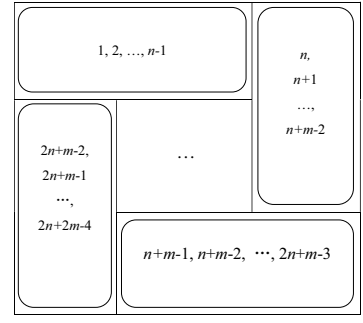}
\caption{Nonlocal subset of Shi \emph{et al.}'s UPB in $\mathbb{C}^{m} \otimes \mathbb{C}^{n}$ quantum system for $3\leq m\leq n$. Here $j$ denotes the product states $|\phi_{j}\rangle$ for $j$=$1$, $2$, $\cdots$, $2n+2m-4$. }
\end{figure}

A nonlocal subset of Shi \emph{et al.}'s UPB is given in Theorem 2. The structure of the subset can be seen in FIG. 2.

\begin{theorem}
In $\mathbb{C}^{m}\otimes\mathbb{C}^{n}$ quantum system, the set of the $2(m+n)-4$ product states in Eqs. (30) is nonlocal, \emph{i.e.}, it cannot be exactly discriminated using only LOCC.
\begin{equation}
\begin{aligned}
|\phi_{\sigma+1}\rangle=|0\rangle_{A}[\sum_{j=0}^{n-2}(\omega_{1})^{\sigma j}|j\rangle]_{B},\\
|\phi_{\eta+n}\rangle=[\sum_{j=0}^{m-2}(\omega_{2})^{\eta j}|j\rangle]_{A}|(n-1)\rangle_{B},\\
|\phi_{\sigma+n+m-1}\rangle=|(m-1)\rangle_{A}[\sum_{j=0}^{n-2}(\omega_{1})^{\sigma j}|(j+1)\rangle]_{B},\\
|\phi_{\eta+2n+m-2}\rangle=[\sum_{j=0}^{m-2}(\omega_{2})^{\eta j}|(j+1)\rangle]_{A}|0\rangle_{B},
\end{aligned}
\end{equation}
where $3\leq m \leq n$, $\omega_{1}=e^{\frac{2\pi \sqrt{-1}}{n-1}}$, $\omega_{2}=e^{\frac{2\pi \sqrt{-1}}{m-1}}$, $\sigma=0,1,\cdots,n-2$ and $\eta=0,1,\cdots,m-2.$
\end{theorem}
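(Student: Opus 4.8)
The plan is to reproduce, at the level of general $m,n$, the mechanism behind Theorem~1: I would show that whichever party must make the first nontrivial orthogonality-preserving measurement is in fact forced into a trivial one, whence, by the criterion of \cite{Walgate2002,Sixia2015}, the set cannot be perfectly distinguished by LOCC. Concretely, suppose Alice measures first with a POVM $\{E_k=M_k^{\dag}M_k\}$ and write $E_k=(a_{st}^{k})_{0\le s,t\le m-1}$ (a Hermitian matrix) in the computational basis of $\mathbb{C}^{m}$; the requirement is that every pair of post-measurement states $(M_k\otimes I_n)|\phi_i\rangle,(M_k\otimes I_n)|\phi_j\rangle$ remains orthogonal. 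The argument would then run in two stages — first kill the ``outer'' rows and columns of $E_k$ (indices $0$ and $m-1$), then fill in the central $(m-1)\times(m-1)$ block — and finish by invoking the symmetry of the construction for Bob.

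Stage~1. The state $|\phi_1\rangle$ has Alice part $|0\rangle_A$, while the $m-1$ states $|\phi_{\eta+2n+m-2}\rangle$ of the fourth family have Alice parts lying in $\mathrm{span}\{|1\rangle,\dots,|m-1\rangle\}_A$ and Bob part $|0\rangle_B$, which overlaps $|\phi_1\rangle$'s Bob part with coefficient $1$. Hence $\langle\phi_1|(E_k\otimes I_n)|\phi_{\eta+2n+m-2}\rangle=0$ for $\eta=0,\dots,m-2$ becomes $\sum_{j=0}^{m-2}(\omega_2)^{\eta j}a_{0,j+1}^{k}=0$, a linear system in $a_{0,1}^{k},\dots,a_{0,m-1}^{k}$ whose coefficient matrix is Vandermonde in the numbers $1,\omega_2,\dots,(\omega_2)^{m-2}$, which are pairwise distinct by Lemma~4; by Lemma~2 its determinant is nonzero, so by Lemma~3 all these entries vanish, and Hermiticity of $E_k$ then kills the first column. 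Repeating this with $|\phi_{n+m-1}\rangle$ (Alice part $|m-1\rangle_A$) against the second family kills the last row and last column off the diagonal. After Stage~1 only the central block $\{a_{st}^{k}:1\le s,t\le m-2\}$ and the diagonal entries $a_{00}^{k},a_{m-1,m-1}^{k}$ can be nonzero.

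Stage~2. The $m-1$ states of the second family share the Bob part $|n-1\rangle_B$, so their mutual orthogonality gives $\sum_{t=0}^{m-2}(\omega_2)^{\eta t}b_{t}^{(\eta')}=0$ for $\eta\ne\eta'$, where $b_{t}^{(\eta')}:=\sum_{s=0}^{m-2}(\overline{\omega_2})^{\eta' s}a_{st}^{k}$. I would solve this system (again Vandermonde, using $\sum_{t=0}^{m-2}(\omega_2)^{\ell t}=0$ for $\ell\not\equiv0\pmod{m-1}$) to get $b_{t}^{(\eta')}=(\overline{\omega_2})^{\eta' t}b_{0}^{(\eta')}$, and Stage~1 makes $b_{0}^{(\eta')}=a_{00}^{k}$. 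Then, for each $t=1,\dots,m-2$, the relations $\sum_{s=0}^{m-2}(\overline{\omega_2})^{\eta' s}a_{st}^{k}=(\overline{\omega_2})^{\eta' t}a_{00}^{k}$ ($\eta'=0,\dots,m-2$) form a Vandermonde system whose right-hand side is exactly $a_{00}^{k}$ times its $t$th column, so Lemma~3 forces $a_{tt}^{k}=a_{00}^{k}$ and $a_{st}^{k}=0$ for $s\ne t$. Handling the fourth family identically — now anchoring on the last column through Stage~1 — gives $a_{ii}^{k}=a_{m-1,m-1}^{k}$ and $a_{i'i}^{k}=0$ for $1\le i'\ne i\le m-1$. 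Combining, $E_k$ is diagonal with all diagonal entries equal, i.e. $E_k\propto I_m$ for every $k$, so Alice can measure only trivially. Finally, since the first and third families carry on Bob's side precisely the Fourier structure that the second and fourth carry on Alice's — with $|\phi_n\rangle$ and $|\phi_{2n+m-2}\rangle$ taking the role of $|\phi_1\rangle$ and $|\phi_{n+m-1}\rangle$ — the very same two stages show that every POVM element of Bob is proportional to $I_n$. Hence neither party can go first, and the set is nonlocal.

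I expect the only real difficulty to be organizational: among the many available orthogonality relations one has to pick exactly those that collapse to Vandermonde systems, and to keep the index shifts $j\mapsto j+1$ of the third and fourth families straight. The single point that needs an argument not already present in Theorem~1 is Stage~2 for general $m$ — verifying that the $(m-1)\times(m-1)$ mutual-orthogonality system still has a nonzero Vandermonde determinant and a right-hand side equal to the relevant coefficient column — and that rests only on the geometric-sum identity $\sum_{t=0}^{m-2}(\omega_2)^{\ell t}=0$ for $\ell\not\equiv0\pmod{m-1}$, together with Lemma~4.
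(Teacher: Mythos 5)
Your proposal is correct and follows essentially the same route as the paper's Appendix A: Stage 1 matches the paper's use of $|\phi_1\rangle$ against the fourth family and $|\phi_{n+m-1}\rangle$ against the second family to annihilate the outer rows and columns via Vandermonde systems (Eqs.\ (A1)--(A8)), and Stage 2 matches the paper's treatment of the mutual orthogonality within the second and fourth families, where the right-hand side being $a_{00}^{k}$ (resp.\ $a_{(m-1)(m-1)}^{k}$) times a column of the Vandermonde matrix forces the central block to be a multiple of the identity (Eqs.\ (A9)--(A11)), followed by the same symmetry argument for Bob. Your phrasing of the underdetermined homogeneous system via the kernel vector $((\overline{\omega_2})^{\eta' t})_t$ is a slightly cleaner packaging of what the paper does by moving one term to the right-hand side, but it is not a different argument.
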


The proof of Theorem 2 is showed in Appendix A. It should be noted that Theorem 2 is still true when $m\geq 3$ and $n\geq3$, which can be easily seen from the proof process.
\section{Novel Nonlocal sets of orthogonal product states}

\begin{figure}
\small
\centering
\includegraphics[scale=0.6]{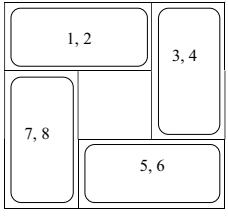}
\caption{The depiction of Bennett \emph{et al.}'s states in $\mathbb{C}^{3} \otimes \mathbb{C}^{3}$ quantum system as a set of dominoes. Here $j$ denotes the product state $|\phi_{j}\rangle$ for $j=1,\,2,\,\cdots,\,8.$}
\end{figure}
In Ref. \cite{Bennett1999}, Bennett \emph{et al.} constructed 9 orthogonal product states, which cannot be exactly discriminated by two separated observers if only LOCC are allowed between them. Feng \emph{et al.} \cite{Feng2009} showed that 8 (see FIG. 3 and Eqs. (31)) of those 9 product states are still indistinguishable using only LOCC in $\mathbb{C}^{3} \otimes \mathbb{C}^{3}$ quantum system.
\begin{equation}
\begin{aligned}
|\phi_{1}\rangle=|0\rangle_{A}(|0\rangle+|1\rangle)_{B},\\
|\phi_{2}\rangle=|0\rangle_{A}(|0\rangle-|1\rangle)_{B},\\
|\phi_{3}\rangle=(|0\rangle+|1\rangle)_{A}|2\rangle_{B},\\
|\phi_{4}\rangle=(|0\rangle-|1\rangle)_{A}|2\rangle_{B},\\
|\phi_{5}\rangle=|2\rangle_{A}(|1\rangle+|2\rangle)_{B},\\
|\phi_{6}\rangle=|2\rangle_{A}(|1\rangle-|2\rangle)_{B},\\
|\phi_{7}\rangle=(|1\rangle+|2\rangle)_{A}|2\rangle_{B},\\
|\phi_{8}\rangle=(|1\rangle-|2\rangle)_{A}|2\rangle_{B}.
\end{aligned}
\end{equation}

The set of these 8 states has a very good structure, which is very helpful for people to understand QNWE. Inspired by the construction method of these states, we give a nonlocal set of OPSs (see FIG. 4) in $\mathbb{C}^{5} \otimes \mathbb{C}^{5}$ quantum system, \emph{i.e.},

\begin{equation}
\begin{aligned}
|\phi_{1,2}\rangle=|0\rangle_{A}(|0\rangle\pm|1\rangle)_{B},\\
|\phi_{3,4}\rangle=|0\rangle_{A}(|2\rangle\pm|3\rangle)_{B},\\
|\phi_{5,6}\rangle=(|0\rangle\pm|1\rangle)_{A}|4\rangle_{B},\\
|\phi_{7,8}\rangle=(|2\rangle\pm|3\rangle)_{A}|4\rangle_{B},\\
|\phi_{9,10}\rangle=|4\rangle_{A}(|3\rangle\pm|4\rangle)_{B},\\
|\phi_{11,12}\rangle=|4\rangle_{A}(|1\rangle\pm|2\rangle)_{B},\\
|\phi_{13,14}\rangle=(|3\rangle\pm|4\rangle)_{A}|0\rangle_{B},\\
|\phi_{15,16}\rangle=(|1\rangle\pm|2\rangle)_{A}|4\rangle_{B}.\\
\end{aligned}
\end{equation}
\begin{theorem}
The set of 16 OPSs in Eqs. (32) is nonlocal, \emph{i.e.}, each state of the set cannot be perfectly distinguished by LOCC.
\end{theorem}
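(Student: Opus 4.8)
The plan is to follow the proof of Theorem 1: I will show that in any LOCC protocol the party measuring first can only apply a trivial measurement without destroying the mutual orthogonality of the post-measurement states, which by Refs.~\cite{Walgate2002,Sixia2015} forces the set of Eqs. (32) to be nonlocal. Suppose Alice measures first with POVM elements $\{M_{k}^{\dag}M_{k}:k=1,\dots,l\}$ and write $E_{k}=M_{k}^{\dag}M_{k}=(a_{rs}^{k})_{0\le r,s\le4}$, a $5\times5$ positive semidefinite Hermitian matrix in the basis $\{|0\rangle,\dots,|4\rangle\}$. Writing $|\phi_{j}\rangle=|u_{j}\rangle_{A}|v_{j}\rangle_{B}$, orthogonality preservation reads $\langle u_{i}|E_{k}|u_{j}\rangle\langle v_{i}|v_{j}\rangle=0$ for all $i\neq j$, which is automatic when the Bob parts are orthogonal and reduces to the linear condition $\langle u_{i}|E_{k}|u_{j}\rangle=0$ precisely for those pairs whose Bob parts overlap. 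As FIG. 4 makes clear, the sixteen states form eight ``dominoes'' arranged cyclically around the boundary of the $5\times5$ grid, so these overlapping-Bob pairs fall into four families: the $|0\rangle_{A}$ states $|\phi_{1}\rangle,\dots,|\phi_{4}\rangle$ against the states with Bob parts $|1\rangle\pm|2\rangle$, $|3\rangle\pm|4\rangle$ and $|0\rangle$; the four states with Bob part $|4\rangle$ among themselves; those same four against the states with Bob parts $|3\rangle\pm|4\rangle$; and the four states with Bob part $|0\rangle$ among themselves.

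I would then write out the resulting linear systems for the $a_{rs}^{k}$ exactly as in Eqs. (2)--(29), the difference being that every coefficient determinant that arises here is elementary and nonzero, so Lemma 1-3 still supply a unique (trivial) solution at each step. Solving the families in order: the first forces $a_{01}^{k}=a_{02}^{k}=a_{03}^{k}=a_{04}^{k}=0$; the second, in particular pairing $(|0\rangle\pm|1\rangle)_{A}|4\rangle_{B}$ with $(|2\rangle\pm|3\rangle)_{A}|4\rangle_{B}$ and with each other, gives $a_{12}^{k}=a_{13}^{k}=0$ together with $a_{00}^{k}=a_{11}^{k}$ and $a_{22}^{k}=a_{33}^{k}$; the third gives $a_{14}^{k}=a_{24}^{k}=a_{34}^{k}=0$; and the fourth, pairing $(|3\rangle\pm|4\rangle)_{A}|0\rangle_{B}$ with $(|1\rangle\pm|2\rangle)_{A}|0\rangle_{B}$ and with each other, gives the last off-diagonal entry $a_{23}^{k}=0$ as well as $a_{33}^{k}=a_{44}^{k}$ and $a_{11}^{k}=a_{22}^{k}$. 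Collecting everything, $E_{k}=M_{k}^{\dag}M_{k}=a_{00}^{k}I_{5\times5}$ for every $k$, so Alice cannot begin with a nontrivial measurement. For Bob I would invoke the symmetry of the set: the local unitary $|r\rangle_{A}|c\rangle_{B}\mapsto|c\rangle_{A}|(4-r)\rangle_{B}$ permutes the sixteen states among themselves while exchanging the two parties, so the same conclusion holds with Alice and Bob interchanged; since neither party can start, the set is nonlocal.

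I expect the one genuine difficulty to be the bookkeeping in the middle step, namely checking that the constraints ``turn the corners'' of the boundary. The off-diagonal entries coupling the two halves of a two-dimensional block ($a_{12}$ and $a_{23}$) are eliminated only by pairing states that share an identical Bob part ($|4\rangle$ or $|0\rangle$) while having orthogonal Alice parts; the diagonal equalities must chain all the way from $a_{00}^{k}$ to $a_{44}^{k}$ through both families (ii) and (iv); and pinning down the imaginary parts of the off-diagonal entries requires using both $\langle\phi_{i}|(E_{k}\otimes I_{5\times5})|\phi_{j}\rangle=0$ and its complex conjugate, just as the two ``conjugate'' systems appear side by side throughout the proof of Theorem 1.
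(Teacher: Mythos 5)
Your proposal is correct and follows essentially the same route as the paper's proof: zero out the first and last rows/columns of $M_k^{\dag}M_k$ from the pairs orthogonal only on Alice's side, then use the four-state groups sharing Bob part $|4\rangle$ (resp.\ $|0\rangle$) to kill the remaining off-diagonal entries and chain the diagonal equalities $a_{00}^{k}=a_{11}^{k}=\cdots=a_{44}^{k}$, and finally invoke the swap symmetry for Bob. You also correctly read $|\phi_{15,16}\rangle$ as having Bob part $|0\rangle$ (the $|4\rangle_{B}$ in Eqs.\ (32) is a typo, as the paper's own orthogonality relations and FIG.\ 4 confirm), and your explicit exchange map $|r\rangle_{A}|c\rangle_{B}\mapsto|c\rangle_{A}|(4-r)\rangle_{B}$ makes the paper's appeal to ``symmetry'' precise.
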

\begin{figure}
\small
\centering
\includegraphics[scale=0.6]{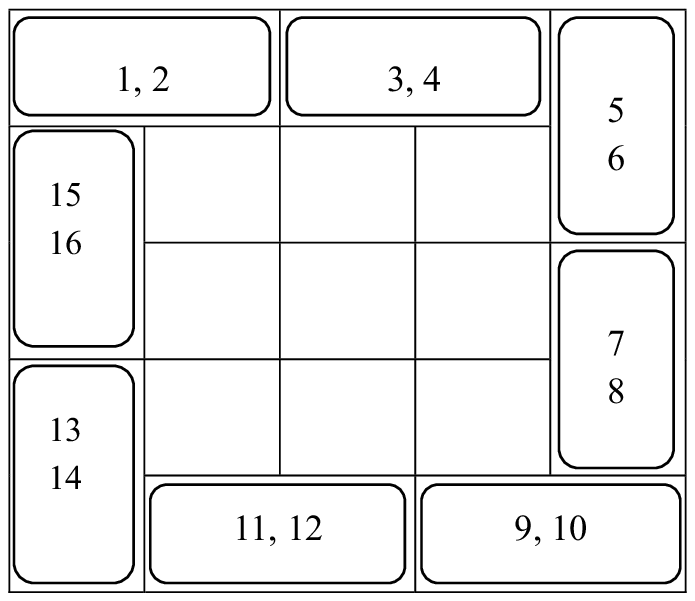}
\caption{A nonlocal set in $5\otimes5$ quantum system. $j$ denotes the product state $|\phi_{j}\rangle$ for $j=1,2,\cdots,16.$}
\end{figure}
\begin{proof} To discriminate these states, someone needs to perform a measurement of preserving orthogonality. That is, the states that are orthogonal only on Alice's (Bob's) side are still orthogonal on this side after measurement. Thus we need to show that Alice or Bob only can perform a nontrivial measurement no matter who goes first. Suppose that Alice performs a measure of preserving orthogonality with positive operator-value measurement (POVM) elements
\begin{equation}
\nonumber
\begin{split}
M_{k}^{\dag}M_{k}=
\left[
  \begin{array}{ccccc}
    a_{00}^{k}  &a_{01}^{k}  &a_{02}^{k}  &a_{03}^{k}  &a_{04}^{k}\\
    a_{10}^{k}  &a_{11}^{k}  &a_{12}^{k}  &a_{13}^{k}  &a_{14}^{k}\\
    a_{20}^{k}  &a_{21}^{k}  &a_{22}^{k}  &a_{23}^{k}  &a_{24}^{k}\\
    a_{30}^{k}  &a_{31}^{k}  &a_{32}^{k}  &a_{33}^{k}  &a_{34}^{k}\\
    a_{40}^{k}  &a_{41}^{k}  &a_{42}^{k}  &a_{43}^{k}  &a_{44}^{k}\\
\end{array}
\right]
\end{split}
\end{equation}
in the basis $\{|0\rangle,\,|1\rangle,\,|2\rangle,\,|3\rangle,\, |4\rangle\}$.

Because $(M_{k}\otimes I_{5\times5})|\phi_{1}\rangle$ is orthogonal to either of the states $\{(M_{k}\otimes I_{5\times5})|\phi_{j}\rangle,\,\,(M_{k}\otimes I)|\phi_{j+1}\rangle\}$ for $j=13,\,15$, we have
\begin{equation}
\nonumber
\left\{
\begin{aligned}
\langle\phi_{1}|M_{k}^{\dag}M_{k}\otimes I_{5\times5}|\phi_{j}\rangle=0\quad\\
\langle\phi_{1}|M_{k}^{\dag}M_{k}\otimes I_{5\times5}|\phi_{j+1}\rangle=0\\
\langle\phi_{j}|M_{k}^{\dag}M_{k}\otimes I_{5\times5}|\phi_{1}\rangle=0\quad\\
\langle\phi_{j+1}|M_{k}^{\dag}M_{k}\otimes I_{5\times5}|\phi_{1}\rangle=0\\
\end{aligned}
\right.
\end{equation}
Thus we get
\begin{equation}
\nonumber
\left\{
\begin{aligned}
\langle0|M_{k}^{\dag}M_{k}(|3\rangle+|4\rangle)_{A}=a^{k}_{03}+a^{k}_{04}=0\\
\langle0|M_{k}^{\dag}M_{k}(|3\rangle-|4\rangle)_{A}=a^{k}_{03}-a^{k}_{04}=0\\
(\langle3|+\langle4|)M_{k}^{\dag}M_{k}|0\rangle_{A}=a^{k}_{30}+a^{k}_{40}=0\\
(\langle3|-\langle4|)M_{k}^{\dag}M_{k}|0\rangle_{A}=a^{k}_{30}-a^{k}_{40}=0\\
\end{aligned}
\right.,
\end{equation}
and
\begin{equation}
\nonumber
\left\{
\begin{aligned}
\langle0|M_{k}^{\dag}M_{k}(|1\rangle+|2\rangle)_{A}=a^{k}_{01}+a^{k}_{02}=0\\
\langle0|M_{k}^{\dag}M_{k}(|1\rangle-|2\rangle)_{A}=a^{k}_{01}-a^{k}_{02}=0\\
(\langle1|+\langle2|)M_{k}^{\dag}M_{k}|0\rangle_{A}=a^{k}_{10}+a^{k}_{20}=0\\
(\langle1|-\langle2|)M_{k}^{\dag}M_{k}|0\rangle_{A}=a^{k}_{10}-a^{k}_{20}=0\\
\end{aligned}
\right..
\end{equation}
Therefore, we have
\begin{equation}
\left\{
\begin{aligned}
a^{k}_{01}=a^{k}_{02}=a^{k}_{03}=a^{k}_{04}=0\\
a^{k}_{10}=a^{k}_{20}=a^{k}_{30}=a^{k}_{40}=0\\
\end{aligned}
\right..
\end{equation}
Similarly, because $(M_{k}\otimes I_{5\times5})|\phi_{9}\rangle$ is orthogonal to either of the states $\{(M_{k}\otimes I_{5\times5})|\phi_{j}\rangle,\,\,(M_{k}\otimes I_{5\times5})|\phi_{j+1}\rangle\}$ for $j=5,\,7$, we have
\begin{equation}
\nonumber
\left\{
\begin{aligned}
\langle\phi_{9}|M_{k}^{\dag}M_{k}\otimes I_{5\times5}|\phi_{j}\rangle=0\quad\\
\langle\phi_{9}|M_{k}^{\dag}M_{k}\otimes I_{5\times5}|\phi_{j+1}\rangle=0\\
\langle\phi_{j}|M_{k}^{\dag}M_{k}\otimes I_{5\times5}|\phi_{9}\rangle=0\quad\\
\langle\phi_{j+1}|M_{k}^{\dag}M_{k}\otimes I_{5\times5}|\phi_{9}\rangle=0\\
\end{aligned}
\right..
\end{equation}
Thus we get
\begin{equation}
\nonumber
\left\{
\begin{aligned}
\langle4|M_{k}^{\dag}M_{k}(|0\rangle+|1\rangle)_{A}=a^{k}_{40}+a^{k}_{41}=0\\
\langle4|M_{k}^{\dag}M_{k}(|0\rangle-|1\rangle)_{A}=a^{k}_{40}-a^{k}_{41}=0\\
(\langle0|+\langle1|)M_{k}^{\dag}M_{k}|4\rangle_{A}=a^{k}_{04}+a^{k}_{14}=0\\
(\langle0|-\langle1|)M_{k}^{\dag}M_{k}|4\rangle_{A}=a^{k}_{04}-a^{k}_{14}=0\\
\end{aligned}
\right.
\end{equation}
and
\begin{equation}
\nonumber
\left\{
\begin{aligned}
\langle4|M_{k}^{\dag}M_{k}(|2\rangle+|3\rangle)_{A}=a^{k}_{42}+a^{k}_{43}=0\\
\langle4|M_{k}^{\dag}M_{k}(|2\rangle-|3\rangle)_{A}=a^{k}_{42}-a^{k}_{43}=0\\
(\langle2|+\langle3|)M_{k}^{\dag}M_{k}|4\rangle_{A}=a^{k}_{24}+a^{k}_{34}=0\\
(\langle2|-\langle3|)M_{k}^{\dag}M_{k}|4\rangle_{A}=a^{k}_{24}-a^{k}_{34}=0\\
\end{aligned}
\right.
\end{equation}
Therefore, we have
\begin{equation}
\left\{
\begin{aligned}
a^{k}_{40}=a^{k}_{41}=a^{k}_{42}=a^{k}_{43}=0\\
a^{k}_{04}=a^{k}_{14}=a^{k}_{24}=a^{k}_{34}=0\\
\end{aligned}
\right..
\end{equation}

Due to the orthogonality of $(M_{k}\otimes I_{5\times5})|\phi_{15}\rangle$ and  $(M_{k}\otimes I_{5\times5})|\phi_{16}\rangle$, we have
\begin{equation}
\nonumber
\left\{
\begin{aligned}
a_{11}-a_{12}+a_{21}-a_{22}=0\\
a_{11}+a_{12}-a_{21}-a_{22}=0\\
\end{aligned}
\right..
\end{equation}
Thus we get
\begin{equation}
\begin{aligned}
a_{11}=a_{22}.\\
\end{aligned}
\end{equation}
Similarly, we get
\begin{equation}
\left\{
\begin{aligned}
a_{33}=a_{44}\\
a_{00}=a_{11}\\
a_{22}=a_{33}\\
\end{aligned}
\right..
\end{equation}
by the orthogonality of $(M_{k}\otimes I_{5\times5})|\phi_{j}\rangle$ and  $(M_{k}\otimes I_{5\times5})|\phi_{j+1}\rangle$ for $j=13,\,5,\,7.$

Because $(M_{k}\otimes I_{5\times5})|\psi_{15}\rangle$ and $(M_{k}\otimes I_{5\times5})|\psi_{16}\rangle$ are orthogonal to $(M_{k}\otimes I_{5\times5})|\psi_{13}\rangle$ and $(M_{k}\otimes I_{5\times5})|\psi_{14}\rangle$ $\}$, we get
\begin{equation}
\nonumber
\left\{
\begin{aligned}
\langle\phi_{15}|M_{k}^{\dag}M_{k}\otimes I_{5\times5}|\phi_{13}\rangle=0\,\,\\
\langle\phi_{15}|M_{k}^{\dag}M_{k}\otimes I_{5\times5}|\phi_{14}\rangle=0\,\,\\
\langle\phi_{16}|M_{k}^{\dag}M_{k}\otimes I_{5\times5}|\phi_{13}\rangle=0\\
\langle\phi_{16}|M_{k}^{\dag}M_{k}\otimes I_{5\times5}|\phi_{14}\rangle=0\\
\end{aligned}
\right.,
\end{equation}
\begin{equation}
\nonumber
\left\{
\begin{aligned}
\langle\phi_{13}|M_{k}^{\dag}M_{k}\otimes I_{5\times5}|\phi_{15}\rangle=0\,\,\\
\langle\phi_{14}|M_{k}^{\dag}M_{k}\otimes I_{5\times5}|\phi_{15}\rangle=0\,\,\\
\langle\phi_{13}|M_{k}^{\dag}M_{k}\otimes I_{5\times5}|\phi_{16}\rangle=0\\
\langle\phi_{14}|M_{k}^{\dag}M_{k}\otimes I_{5\times5}|\phi_{16}\rangle=0\\
\end{aligned}
\right.,
\end{equation}
\emph{i.e.},
\begin{equation}
\left\{
\begin{aligned}
a_{13}^{k}+a_{14}^{k}+a_{23}^{k}+a_{24}^{k}=0\\
a_{13}^{k}-a_{14}^{k}+a_{23}^{k}-a_{24}^{k}=0\\
a_{13}^{k}+a_{14}^{k}-a_{23}^{k}-a_{24}^{k}=0\\
a_{13}^{k}-a_{14}^{k}-a_{23}^{k}+a_{24}^{k}=0\\
\end{aligned}
\right.,
\end{equation}
\begin{equation}
\left\{
\begin{aligned}
a_{31}^{k}+a_{32}^{k}+a_{41}^{k}+a_{42}^{k}=0\\
a_{31}^{k}+a_{32}^{k}-a_{41}^{k}-a_{42}^{k}=0\\
a_{31}^{k}-a_{32}^{k}+a_{41}^{k}-a_{42}^{k}=0\\
a_{31}^{k}-a_{32}^{k}-a_{41}^{k}+a_{42}^{k}=0\\
\end{aligned}
\right..
\end{equation}
By Eqs. (37) and (38), we have
\begin{equation}
\begin{aligned}
a_{13}^{k}=a_{14}^{k}=a_{23}^{k}=a_{24}^{k}=0,\\
\end{aligned}
\end{equation}
\begin{equation}
\begin{aligned}
a_{31}^{k}=a_{32}^{k}=a_{41}^{k}=a_{42}^{k}=0.\\
\end{aligned}
\end{equation}

Because $\{(M_{k}\otimes I_{5\times5})|\psi_{5}\rangle$ and $(M_{k}\otimes I_{5\times5})|\psi_{6}\rangle$ are orthogonal to $(M_{k}\otimes I)|\psi_{7}\rangle$ and $(M_{k}\otimes I)|\psi_{8}\rangle\}$ on Alice's side, \emph{i.e.},
\begin{equation}
\nonumber
\left\{
\begin{aligned}
\langle\phi_{5}|M_{k}^{\dag}M_{k}\otimes I_{5\times5}|\psi_{7}\rangle=0\\
\langle\phi_{5}|M_{k}^{\dag}M_{k}\otimes I_{5\times5}|\psi_{8}\rangle=0\\
\langle\phi_{6}|M_{k}^{\dag}M_{k}\otimes I_{5\times5}|\psi_{7}\rangle=0\\
\langle\phi_{6}|M_{k}^{\dag}M_{k}\otimes I_{5\times5}|\psi_{8}\rangle=0\\
\end{aligned}
\right.,
\end{equation}
\begin{equation}
\nonumber
\left\{
\begin{aligned}
\langle\phi_{7}|M_{k}^{\dag}M_{k}\otimes I_{5\times5}|\phi_{5}\rangle=0\\
\langle\phi_{7}|M_{k}^{\dag}M_{k}\otimes I_{5\times5}|\phi_{6}\rangle=0\\
\langle\phi_{8}|M_{k}^{\dag}M_{k}\otimes I_{5\times5}|\phi_{5}\rangle=0\\
\langle\phi_{8}|M_{k}^{\dag}M_{k}\otimes I_{5\times5}|\phi_{6}\rangle=0\\
\end{aligned}
\right.,
\end{equation}
we have
\begin{equation}
\left\{
\begin{aligned}
a_{02}^{k}+a_{03}^{k}+a_{12}^{k}+a_{13}^{k}=0\\
a_{02}^{k}-a_{03}^{k}+a_{12}^{k}-a_{13}^{k}=0\\
a_{02}^{k}+a_{03}^{k}-a_{12}^{k}-a_{13}^{k}=0\\
a_{02}^{k}-a_{03}^{k}-a_{12}^{k}+a_{13}^{k}=0\\
\end{aligned}
\right.,
\end{equation}
\begin{equation}
\left\{
\begin{aligned}
a_{20}^{k}+a_{21}^{k}+a_{30}^{k}+a_{31}^{k}=0\\
a_{20}^{k}-a_{21}^{k}+a_{30}^{k}-a_{31}^{k}=0\\
a_{20}^{k}+a_{21}^{k}-a_{30}^{k}-a_{31}^{k}=0\\
a_{20}^{k}-a_{21}^{k}-a_{30}^{k}+a_{31}^{k}=0\\
\end{aligned}
\right.,
\end{equation}
By Eqs.(41) and (42), we have
\begin{equation}
\begin{aligned}
a_{02}^{k}=a_{03}^{k}=a_{12}^{k}=a_{13}^{k}=0,\\
\end{aligned}
\end{equation}
\begin{equation}
\begin{aligned}
a_{20}^{k}=a_{21}^{k}=a_{30}^{k}=a_{31}^{k}=0.\\
\end{aligned}
\end{equation}

By Eqs. (33), (34), (35), (36), (39), (40), (43) and (44), we have
\begin{equation}
\nonumber
\begin{split}
M_{k}^{\dag}M_{k}=
\left[
  \begin{array}{ccccc}
    a_{00}^{k}   &0          &0            &0                 &0\\
    0            &a_{00}^{k} &0            &0                 &0\\
    0            &0          &a_{00}^{k}   &0                 &0\\
    0            &0          &0            &a_{00}^{k}        &0\\
    0            &0          &0            &0                 &a_{00}^{k} \\
  \end{array}
\right].
\end{split}
\end{equation}
That is, all the POVM elements of Alice are proportional to identity operator. This means that Alice cannot get any useful information to discriminate these states to preserve the orthogonality of the psot-measurement states. So does Bob by the symmetry of the set of OPSs in Eqs. (32). Therefore, these states cannot be reliably discriminated by LOCC, \emph{i.e.}, the set of these states is nonlocal. This completes the proof.
\end{proof}

By the structure of the set consisted of the OPSs in Theorem 3, we can easily generalize our construction to a general case.
\begin{figure}
\small
\centering
\includegraphics[scale=0.6]{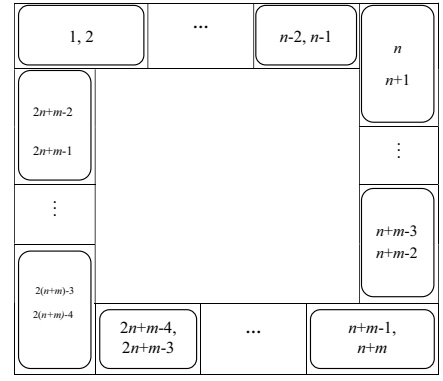}
\caption{A nonlocal set of $2(m+n)-4$ OPSs in $m\otimes n$ for $m=2d_{1}+1$ and $n=2d_{2}+1$.}
\end{figure}
\begin{theorem}
In $\mathbb{C}^{m}\otimes\mathbb{C}^{n}$ quantum system, the set of the following $2(m+n)-4$ OPSs, \emph{i.e.},
\begin{equation}
\begin{aligned}
|\phi_{1+2\delta}\rangle=|0\rangle_{A}[|2\delta\rangle+|(2\delta+1)\rangle]_{B},\\
|\phi_{2+2\delta}\rangle=|0\rangle_{A}[|2\delta\rangle-|(2\delta+1)\rangle]_{B},\\
|\phi_{n+2\sigma}\rangle=[|2\sigma\rangle+|(2\sigma+1)\rangle]_{A}|(n-1)\rangle_{B},\\
|\phi_{n+1+2\sigma}\rangle=[|2\sigma\rangle-|(2\sigma+1)\rangle]_{A}|(n-1)\rangle_{B},\\
|\phi_{n+m-1+2\delta}\rangle=|(m-1)\rangle_{A}[|(2\delta+1)\rangle+|(2\delta+2)\rangle]_{B},\\
|\phi_{n+m+2\delta}\rangle=|(m-1)\rangle_{A}[|(2\delta+1)\rangle-|(2\delta+2)\rangle]_{B},\\
|\phi_{2n+m-2+2\sigma}\rangle=[|(2\sigma+1)\rangle+|(2\sigma+2)\rangle]_{A}|0\rangle_{B},\\
|\phi_{2n+m-1+2\sigma}\rangle=[|(2\sigma+1)\rangle-|(2\sigma+2)\rangle]_{A}|0\rangle_{B},\\
\end{aligned}
\end{equation}
cannot be perfectly distinguished by LOCC, where $m$ and $n$ are positive integers, $m=2d_{1}+1\geq3$, $n=2d_{2}+1\geq3$; $\delta=0,$ $1,$ $2,$ $\cdots,$ $d_{2}-1$; and $\sigma=0,\,\, 1,\,\, 2,\,\, \cdots,\,\, d_{1}-1$. The structure of the states in Eqs. (45) is showed in FIG. 5.
\end{theorem}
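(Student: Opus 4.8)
The plan is to run, for general odd $m=2d_1+1$ and $n=2d_2+1$, the same argument that proves Theorem 3, which is precisely the case $d_1=d_2=2$. I would assume Alice measures first with POVM elements $\{M_k^{\dag}M_k\}$, each an $m\times m$ Hermitian matrix with entries $a_{ij}^{k}$ ($0\le i,j\le m-1$) in the computational basis, and use the standard necessary condition from the proof of Theorem 3: whenever two states of Eqs. (45) are orthogonal \emph{solely} through their Alice parts $|\alpha\rangle,|\alpha'\rangle$ (so their Bob parts overlap nonzero), one must have $\langle\alpha|M_k^{\dag}M_k|\alpha'\rangle=0$ for every $k$. The goal is to choose enough such pairs to force every $M_k^{\dag}M_k$ to be a scalar matrix; the same then holds for Bob by the symmetry of the construction interchanging $A$ with $B$ (and $m$ with $n$), which gives local indistinguishability.

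First I would kill the boundary rows and columns. The state $|\phi_1\rangle=|0\rangle_A(|0\rangle+|1\rangle)_B$ is orthogonal, through Alice only, to every ``left-column'' state $(|2\sigma+1\rangle\pm|2\sigma+2\rangle)_A|0\rangle_B$, $\sigma=0,\dots,d_1-1$, whose Alice parts together span $|1\rangle,\dots,|m-1\rangle$; this forces $a_{0j}^{k}=0$ for $1\le j\le m-1$, hence also $a_{j0}^{k}=0$ by Hermiticity. Symmetrically, the state obtained from the third family at $\delta=d_2-1$, namely $|m-1\rangle_A(|n-2\rangle+|n-1\rangle)_B$, is orthogonal through Alice only to every ``right-column'' state $(|2\sigma\rangle\pm|2\sigma+1\rangle)_A|n-1\rangle_B$ (Alice parts spanning $|0\rangle,\dots,|m-2\rangle$), which yields $a_{m-1,j}^{k}=a_{j,m-1}^{k}=0$ for all $j\neq m-1$.

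Next I would equalize the diagonal and clear everything else. Inside one pair $(|2\sigma\rangle\pm|2\sigma+1\rangle)_A|n-1\rangle_B$ of the second family the two states are orthogonal through Alice while sharing the Bob part $|n-1\rangle$; adding the two resulting scalar equations gives $a_{2\sigma,2\sigma}^{k}=a_{2\sigma+1,2\sigma+1}^{k}$, and the fourth family $(|2\sigma+1\rangle\pm|2\sigma+2\rangle)_A|0\rangle_B$ likewise gives $a_{2\sigma+1,2\sigma+1}^{k}=a_{2\sigma+2,2\sigma+2}^{k}$; chaining these yields $a_{00}^{k}=a_{11}^{k}=\cdots=a_{m-1,m-1}^{k}$. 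For $\sigma\neq\sigma'$ the states $(|2\sigma\rangle\pm|2\sigma+1\rangle)_A|n-1\rangle_B$ and $(|2\sigma'\rangle\pm|2\sigma'+1\rangle)_A|n-1\rangle_B$ have disjoint Alice supports and identical Bob parts, so all four sign choices stay orthogonal; the resulting $4\times4$ system in $a_{2\sigma,2\sigma'}^{k},a_{2\sigma,2\sigma'+1}^{k},a_{2\sigma+1,2\sigma'}^{k},a_{2\sigma+1,2\sigma'+1}^{k}$ has an invertible (Hadamard-type) coefficient matrix, so these entries vanish; the fourth family does the same for index pairs drawn from distinct blocks $\{2\sigma+1,2\sigma+2\}$. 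Thus the second family annihilates every off-diagonal $a_{ij}^{k}$ with $i,j$ in different blocks of the partition $\{0,1\},\{2,3\},\dots$ of $\{0,\dots,m-2\}$, and the fourth family every such $a_{ij}^{k}$ with $i,j$ in different blocks of $\{1,2\},\{3,4\},\dots$ of $\{1,\dots,m-1\}$. A short parity check shows no pair $\{i,j\}$ lies inside a common block of both partitions, so together with the boundary cases above \emph{all} off-diagonal entries vanish. Hence $M_k^{\dag}M_k=a_{00}^{k}I_m$ for every $k$: Alice's only orthogonality-preserving measurement is trivial, and by the symmetry of Eqs. (45) so is Bob's, which proves the theorem.

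The hard part is the bookkeeping in the last step: one must check that the two pair-partitions, shifted by one index relative to each other, jointly reach every off-diagonal position once the indices $0$ and $m-1$ are removed, and that in each chosen pair the Bob-side overlap is genuinely nonzero so the constraint really descends to $M_k^{\dag}M_k$. The arithmetic for each pair is a routine copy of the computations in the proof of Theorem 3; the restriction to odd $m,n$ is exactly what makes the two partitions interleave cleanly.
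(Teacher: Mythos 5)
Your proposal is correct and follows essentially the same route as the paper's proof of Theorem 4: kill row/column $0$ and row/column $m-1$ via the two ``corner'' states, equalize the diagonal via within-pair orthogonality, kill cross-block entries via the invertible $\pm1$ (Hadamard-type) systems, and finish by symmetry for Bob. The one substantive difference is in clearing row/column $m-1$: you correctly use the third-family state with $\delta=d_{2}-1$, whose Bob part $|n-2\rangle+|n-1\rangle$ genuinely overlaps $|n-1\rangle$, whereas the paper invokes $|\phi_{n+m-1}\rangle$ (the $\delta=0$ state, Bob part $|1\rangle+|2\rangle$), which is already orthogonal to the second family on Bob's side once $n\geq 5$ and hence yields no Alice-side constraint --- so your choice is the one that actually works for all odd $n$, and your explicit parity bookkeeping for the two interleaved pair-partitions is a welcome addition the paper leaves implicit.
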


\begin{proof}
We prove this Theorem by the same method that we use to prove Theorem 3. We need to show that Alice and Bob only can perform a trivial measurement to preserve the orthogonality of the post-measurement states no matter who goes first. Without loss of generality, suppose that Alice perform a general measurement with POVM element
\begin{equation}
\nonumber
\begin{split}
M_{k}^{\dag}M_{k}=
\left[
  \begin{array}{cccc}
    a_{00}^{k}             &a_{01}^{k}        &\cdots    &a_{0,m-1}^{k}\\
    a_{10}^{k}             &a_{11}^{k}        &\cdots    &a_{1,m-1}^{k}\\
    \vdots                 &\vdots            &\ddots    &\vdots \\
    a_{m-1,0}^{k}          &a_{m-1,1}^{k}     &\cdots    &a_{m-1,m-1}^{k} \\
\end{array}
\right]
\end{split}
\end{equation}
in the basis $\{|0\rangle,\,|1\rangle,\,\cdots,\,|(m-1)\rangle\}.$

Because $(M_{k}\otimes I_{n\times n} )|\phi_{1}\rangle$ is orthogonal to each state of the set
$\{(M_{k}\otimes I_{n\times n})|\phi_{2n+m-2+2\sigma}\rangle,\,
(M_{k}\otimes I_{n\times n})|\phi_{2n+m-1+2\sigma}\rangle\}$,
we have
\begin{equation}
\nonumber
\left\{
\begin{aligned}
\langle\phi_{1}|M_{k}^{\dag}M_{k}\otimes I_{n\times n}|\phi_{2n+m-2+2\sigma}\rangle=0\\
\langle\phi_{1}|M_{k}^{\dag}M_{k}\otimes I_{n\times n}|\phi_{2n+m-1+2\sigma}\rangle=0\\
\langle\phi_{2n+m-2+2\sigma}|M_{k}^{\dag}M_{k}\otimes I_{n\times n}|\phi_{1}\rangle=0\\
\langle\phi_{2n+m-1+2\sigma}|M_{k}^{\dag}M_{k}\otimes I_{n\times n}|\phi_{1}\rangle=0\\
\end{aligned}
\right.,
\end{equation}
where $\sigma=0,$ $1,$ $\cdots,$ $d_{1}-1$. Thus we have
\begin{equation}
\left\{
\begin{aligned}
a_{01}^{k}=a_{02}^{k}=a_{03}^{k}=\cdots=a_{0,m-1}^{k}=0\\
a_{10}^{k}=a_{20}^{k}=a_{30}^{k}=\cdots=a_{m-1,0}^{k}=0\\
\end{aligned}
\right..
\end{equation}
Similarly, we have
\begin{equation}
\left\{
\begin{aligned}
a_{0,m-1}^{k}=a_{1,m-1}^{k}=\cdots=a_{m-2,m-1}^{k}=0\\
a_{m-1,0}^{k}=a_{m-1,1}^{k}=\cdots=a_{m-1,m-2}^{k}=0\\
\end{aligned}
\right.
\end{equation}
by the orthogonality of $(M_{k}\otimes I_{n\times n})|\phi_{n+m-1}\rangle$ and each state of the set
$\{(M_{k}\otimes I_{n\times n})|\phi_{n+2\sigma}\rangle,\,
(M_{k}\otimes I_{n\times n})|\phi_{n+1+2\sigma}\rangle\}$.

Due to the orthogonality of $(M_{k}\otimes I_{n\times n})|\phi_{2n+m-2+2\sigma}\rangle$ and $(M_{k}\otimes I_{n\times n})|\phi_{2n+m-1+2\sigma}\rangle$, we have
\begin{equation}
\nonumber
\left\{
\begin{aligned}
\sum_{i=2\sigma+1}^{2\sigma+2}(-1)^{i+1}a_{2\sigma+1,i}^{k}+
\sum_{i=2\sigma+1}^{2\sigma+2}(-1)^{i+1}a_{2\sigma+2,i}^{k}=0\\
\sum_{i=2\sigma+1}^{2\sigma+2}a_{2\sigma+1,i}^{k}-
\sum_{i=2\sigma+1}^{2\sigma+2}a_{2\sigma+2,i}^{k}=0\qquad\qquad\qquad\,\,\\
\end{aligned}
\right..
\end{equation}
Therefore, we obtain
$$a_{2\sigma+1,2\sigma+1}^{k}=a_{2\sigma+2,2\sigma+2}^{k}$$
for $\sigma=0,$ $1,$ $2,$ $\cdots,$ $d_{1}-1$. Thus we have
\begin{equation}
a_{11}^{k}=a_{22}^{k}=\cdots=a_{m-1,m-1}^{k}.
\end{equation}
Similarly, due to the orthogonality of $(M_{k}\otimes I_{n\times n})|\phi_{n+2\sigma}
\rangle$ and $(M_{k}\otimes I_{n\times n})|\phi_{n+1+2\sigma}\rangle$, we have
\begin{equation}
\nonumber
\left\{
\begin{aligned}
a_{2\sigma,2\sigma}^{k}-a_{2\sigma,2\sigma+1}^{k}+a_{2\sigma+1,2\sigma}^{k}
-a_{2\sigma+1,2\sigma+1}^{k}=0\\
a_{2\sigma,2\sigma}^{k}+a_{2\sigma,2\sigma+1}^{k}-a_{2\sigma+1,2\sigma}^{k}-
a_{2\sigma+1,2\sigma+1}^{k}=0\\
\end{aligned}
\right..
\end{equation}
So
\begin{equation}
\nonumber
a_{2\sigma,2\sigma}^{k}=a_{2\sigma+1,2\sigma+1}^{k}\\
\end{equation}
where $\sigma=0,$ $1,$ $2,$ $\cdots,$ $d_{1}-1$. Thus we can get
\begin{equation}
a_{00}^{k}=a_{11}^{k}=\cdots=a_{m-2,m-2}^{k}.
\end{equation}

Because $(M_{k}\otimes I_{n\times n})|\phi_{2n+m-2+2\mu}\rangle$ and $(M_{k}\otimes$ $I_{n\times n})|\phi_{2n+m-1+2\mu}\rangle$ are orthogonal to $(M_{k}\otimes I_{n\times n})|\phi_{2n+m-2+2\lambda}\rangle$ and $(M_{k}\otimes I_{n\times n})|\phi_{2n+m-1+2\lambda}\rangle$, where $\mu,\,\lambda=0,$ $1,$ $2,$ $\cdots,$ $d_{1}-1$ and $\mu\neq \lambda$, \emph{i.e.},
\begin{equation}
\nonumber
\left\{
\begin{aligned}
\langle\phi_{2n+m-2+2\mu}|M_{k}^{\dag}M_{k}\otimes I_{n\times n}|\phi_{2n+m-2+2\lambda}
\rangle=0\\
\langle\phi_{2n+m-2+2\mu}|M_{k}^{\dag}M_{k}\otimes I_{n\times n}|\phi_{2n+m-1+2\lambda}
\rangle=0\\
\langle\phi_{2n+m-1+2\mu}|M_{k}^{\dag}M_{k}\otimes I_{n\times n}|\phi_{2n+m-2+2\lambda}
\rangle=0\\
\langle\phi_{2n+m-1+2\mu}|M_{k}^{\dag}M_{k}\otimes I_{n\times n}|\phi_{2n+m-1+2\lambda}
\rangle=0
\end{aligned}
\right.,
\end{equation}
so we have
\begin{equation}
\nonumber
\left\{
\begin{aligned}
\sum_{i=2\lambda+1}^{2\lambda+2}a_{2\mu+1,i}^{k}+
\sum_{i=2\lambda+1}^{2\lambda+2}a_{2\mu+2,i}^{k}=0\qquad\qquad\qquad\,\,\\
\sum_{i=2\lambda+1}^{2\lambda+2}(-1)^{i+1}a_{2\mu+1,i}^{k}+
\sum_{i=2\lambda+1}^{2\lambda+2}(-1)^{i+1}a_{2\mu+2,i}^{k}=0 \\
\sum_{i=2\lambda+1}^{2\lambda+2}a_{2\mu+1,i}^{k}-
\sum_{i=2\lambda+1}^{2\lambda+2}a_{2\mu+2,i}^{k}=0\qquad\qquad\qquad\,\,\\
\sum_{i=2\lambda+1}^{2\lambda+2}(-1)^{i+1}a_{2\mu+1,i}^{k}+
\sum_{i=2\lambda+1}^{2\lambda+2}(-1)^{i}a_{2\mu+2,i}^{k}=0\,\,\,\, \\
\end{aligned}
\right..
\end{equation}
Thus we get
\begin{equation}
\begin{aligned}
a_{2\mu+1,2\lambda+1}^{k}=a_{2\mu+1,2\lambda+2}^{k}=0,\\
a_{2\mu+2,2\lambda+1}^{k}=a_{2\mu+2,2\lambda+2}^{k}=0,
\end{aligned}
\end{equation}
where $\mu,\,\lambda=0,$ $1,$ $2,$ $\cdots,$ $d_{1}-1$ and $\mu\neq \lambda$. Similarly, because $(M_{k}\otimes I_{n\times n})|\phi_{n+2\mu}\rangle$ and $(M_{k}\otimes I_{n\times n})|\phi_{n+1+2\mu}\rangle$ are orthogonal to $(M_{k}\otimes I)|\phi_{n+2\lambda}\rangle$ and $(M_{k}\otimes I)|\phi_{n+1+2\lambda}\rangle$, we get
\begin{equation}
\begin{aligned}
a_{2\mu,2\lambda}^{k}=a_{2\mu,2\lambda+1}^{k}=0,\\
a_{2\mu+1,2\lambda}^{k}=a_{2\mu+1,2\lambda+1}^{k}=0.
\end{aligned}
\end{equation}
where $\mu,\,\lambda=0,$ $1,$ $2,$ $\cdots,$ $d_{1}-1$ and $\mu\neq \lambda$.

By Eqs. (46), (47), (48), (49), (50) and (51), we have
\begin{equation}
\nonumber
\begin{split}
M_{k}^{\dag}M_{k}=
\left[
  \begin{array}{cccc}
    a_{00}^{k}  &0           &\cdots    &0\\
    0           &a_{00}^{k}  &\cdots    &0\\
    \vdots      &\vdots      &\ddots    &\vdots \\
    0           &0           &\cdots    &a_{00}^{k} \\
\end{array}
\right]
\end{split}
\end{equation}
for $i=1,2,\cdots,l.$ This means that all the POVM elements of Alice are proportional to identity operator. Therefore, Alice only can perform a trivial measurement to preserve the orthogonality of the psot-measurement states. So does Bob by the symmetry of the set of these $2(m+n)-4$ product states. Therefore, these states cannot be reliably discriminated by LOCC no matter who goes first. This completes the proof.
\end{proof}

In Theorem 4, $m$ and $n$ are two odd numbers. In fact, a same conclusion can be got when $m$ and $n$ are all even numbers.
\begin{figure}
\small
\centering
\includegraphics[scale=0.6]{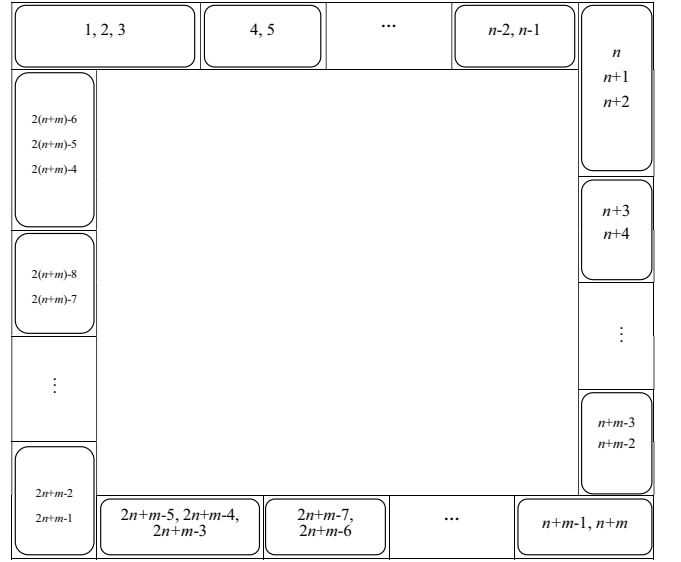}
\caption{A nonlocal set of $2(m+n)-4$ OPSs in $m\otimes n$ for $m=2d_{1}$ and $n=2d_{2}$.}
\end{figure}

\begin{theorem}
In $\mathbb{C}^{m}\otimes\mathbb{C}^{n}$ quantum system, the following $2(m+n)-4$ OPSs, \emph{i.e.},
\end{theorem}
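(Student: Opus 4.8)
The plan is to follow verbatim the strategy already used for Theorem 3 and Theorem 4. Since the set depicted in FIG.~6 again has the cyclic symmetry that interchanges the roles of Alice and Bob, it suffices to show that whichever party performs the first orthogonality-preserving measurement is forced into a trivial one. Without loss of generality suppose Alice goes first with a POVM element
\begin{equation}
\nonumber
M_{k}^{\dag}M_{k}=\bigl(a_{ij}^{k}\bigr)_{0\leq i,j\leq m-1}
\end{equation}
in the computational basis $\{|0\rangle,|1\rangle,\cdots,|(m-1)\rangle\}$, and translate each required orthogonality of post-measurement states into a linear equation in the entries $a_{ij}^{k}$. The goal is to deduce $M_{k}^{\dag}M_{k}=a_{00}^{k}I_{m\times m}$ for every $k$.

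First I would use the two states anchored at the corners $|0\rangle_{A}$ and $|(m-1)\rangle_{A}$: because $|0\rangle_{A}\otimes(\cdots)$ is orthogonal, on Alice's side, to the whole family of dominoes sitting in Bob's column $0$, and $|(m-1)\rangle_{A}\otimes(\cdots)$ is orthogonal to the family in Bob's column $n-1$, the same cancellation as in Eqs.~(33) and (46) kills the first row, first column, last row and last column of $M_{k}^{\dag}M_{k}$ except possibly $a_{00}^{k}$ and $a_{m-1,m-1}^{k}$. Next, the orthogonality of the two states inside each domino pair $|u\rangle_{A}\pm|(u{+}1)\rangle_{A}$ (tested against a common partner on Bob's side) yields $a_{uu}^{k}=a_{u+1,u+1}^{k}$, exactly as in Eqs.~(35), (48), (49); running this over all vertical dominoes on the left edge and on the right edge gives two chains of diagonal equalities, which—because $m$ and $n$ are both even—interlock to give $a_{00}^{k}=a_{11}^{k}=\cdots=a_{m-1,m-1}^{k}$. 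Finally, the orthogonality between two \emph{distinct} domino pairs on the same edge produces, just as in Eqs.~(37), (38), (41), (42), (50), (51), the vanishing of all remaining off-diagonal entries. Assembling these facts gives $M_{k}^{\dag}M_{k}\propto I_{m\times m}$, so Alice's measurement is trivial; by the symmetry of the set the identical argument applies to Bob, and hence the states cannot be perfectly distinguished by LOCC.

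The one step I expect to require genuine care is the interlocking of the two diagonal chains: one must read off the exact index ranges of the four families of states from FIG.~6 and verify that a left-edge domino and a right-edge domino overlap in the correct index, so that the local equalities $a_{uu}^{k}=a_{u+1,u+1}^{k}$ propagate to a single common value rather than leaving the diagonal split into separate blocks. The odd case of Theorem 4 has the analogous feature (compare Eqs.~(48) and (49)), but in the even case the corner indices are shifted by one, so the bookkeeping of which domino touches which basis vector has to be redone from scratch. Everything else reduces to $2\times2$ and $4\times4$ linear systems whose coefficient determinants are manifestly nonzero, so Lemma 2 (Kramer's rule) applies directly and the computations are purely routine.
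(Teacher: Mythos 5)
Your plan treats the set of Theorem 5 as if it were built entirely from $\pm$ dominoes $|u\rangle\pm|(u{+}1)\rangle$, and hinges the diagonal argument on the claim that the left-edge and right-edge chains of equalities $a_{uu}^{k}=a_{u+1,u+1}^{k}$ ``interlock \emph{because} $m$ and $n$ are both even.'' This is exactly backwards, and it is the point where the proof fails. On each edge the dominoes must cover $m-1$ (resp.\ $n-1$) basis vectors — $\{0,\dots,m-2\}$ for the states ending in $|(n-1)\rangle_{B}$ and $\{1,\dots,m-1\}$ for those ending in $|0\rangle_{B}$ — and when $m$ is even these index sets have odd cardinality, so they \emph{cannot} be tiled by two-element blocks at all. (The pure-domino interlocking is precisely the content of Theorem 4, which is why that theorem requires $m=2d_{1}+1$ and $n=2d_{2}+1$ odd.) The construction of Theorem 5 compensates by placing a three-dimensional block $\sum_{j=0}^{2}\omega^{tj}|j\rangle$, $t=0,1,2$, $\omega=e^{2\pi\sqrt{-1}/3}$, at the head of each edge, followed by dominoes on the remaining even-length tail. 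Your proposal never engages with these $\omega$-blocks, so the linear systems you describe do not correspond to the states actually being distinguished.

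Concretely, the missing ingredient is the Vandermonde/Kramer machinery of Lemmas 1--3: the mutual orthogonality of the three states $(M_{k}\otimes I)\,(\sum_{j}\omega^{tj}|j{+}1\rangle)_{A}|0\rangle_{B}$ produces $3\times3$ and $2\times2$ systems with coefficients in powers of $\omega$ (the paper's Eqs.\ (B1)--(B3) and (B7)--(B17)), whose nonvanishing Vandermonde determinants force $a_{11}^{k}=a_{22}^{k}=a_{33}^{k}$ and kill the off-diagonal entries inside the $3\times3$ corner block; only then do the $\pm$-domino cancellations you describe take over for indices $\geq 4$, and the cross-orthogonality between the $\omega$-block and each domino pair (Eqs.\ (B21)--(B24)) is what zeroes the entries coupling the two regions. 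The correct proof is therefore a hybrid of the Theorem 1/2 argument and the Theorem 4 argument, not a re-indexed copy of Theorem 4 alone. Your overall strategy (orthogonality-preserving POVM forced to be $\propto I$, then symmetry between Alice and Bob) is the right frame, but as written the core computation addresses a set of states that does not exist for even $m,n$.
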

\begin{equation}
\nonumber
\begin{aligned}
|\phi_{t+1}\rangle=|0\rangle_{A}(\sum_{j=0}^{2}\omega^{tj}|j\rangle)_{B},\\
|\phi_{2\delta+4}\rangle=|0\rangle_{A}[|(2\delta+3)\rangle+|(2\delta+4)\rangle]_{B},\\
|\phi_{2\delta+5}\rangle=|0\rangle_{A}[|(2\delta+3)\rangle-|(2\delta+4)\rangle]_{B},\\
|\phi_{t+n}\rangle=(\sum_{j=0}^{2}\omega^{tj}|j\rangle)_{A}|(n-1)\rangle_{B},\\
|\phi_{n+2\sigma+3}\rangle=[|(2\sigma+3)\rangle+|(2\sigma+4)\rangle]_{A}
|(n-1)\rangle_{B},\\
|\phi_{n+2\sigma+4}\rangle=[|(2\sigma+3)\rangle-|(2\sigma+4)\rangle]_{A}
|(n-1)\rangle_{B},\\
|\phi_{t+n+m-1}\rangle=|(m-1)\rangle_{A}[\sum_{j=0}^{2}\omega^{tj}
|(j+1)\rangle]_{B},\\
|\phi_{n+m+2\delta+2}\rangle=|(m-1)\rangle_{A}[|(2\delta+4)\rangle
+|(2\delta+5)\rangle]_{B},\\
|\phi_{n+m+2\delta+3}\rangle=|(m-1)\rangle_{A}[|(2\delta+4)\rangle
-|(2\delta+5)\rangle]_{B},\\
|\phi_{t+2n+m-2}\rangle=[\sum_{j=0}^{2}\omega^{tj}|(j+1)\rangle]_{A}
|0\rangle_{B},\\
|\phi_{2n+m+2\sigma+1}\rangle=[|(2\sigma+4)\rangle+|(2\sigma+5)\rangle]_{A}
|0\rangle_{B},\\
|\phi_{2n+m+2\sigma+2}\rangle=[|(2\sigma+4)\rangle-|(2\sigma+5)\rangle]_{A}
|0\rangle_{B},\\
\end{aligned}
\end{equation}
cannot be perfectly distinguished by LOCC, where $m=2d_{1}\geq4$ and $n=2d_{2}\geq4$; $t=0,\,\,1,\,\,2$; $\omega=e^{\frac{2\pi \sqrt{-1}}{3}}$; \,\,$\delta=0,$ $1,$ $2,$ $\cdots,$ $d_{2}-3$; and $\sigma=0,\,\, 1,\,\, 2,\,\, \cdots,\,\, d_{1}-3$. The structure of these states is showed in FIG. 6.

The proof of Theorem 5 is given in Appendix B. By Theorem 5, we can get a special set for $m=n=4$, \emph{i.e.},
\begin{equation}
\begin{aligned}
|\phi_{t+1}\rangle=|0\rangle_{A}(\sum_{j=0}^{2}\omega^{tj}|j\rangle)_{B},\\
|\phi_{t+4}\rangle=(\sum_{j=0}^{2}\omega^{tj}|j\rangle)_{A}|3\rangle_{B},\\
|\phi_{t+7}\rangle=|3\rangle_{A}[\sum_{j=0}^{2}\omega^{tj}
|(j+1)\rangle]_{B},\\
|\phi_{t+10}\rangle=[\sum_{j=0}^{2}\omega^{tj}|(j+1)\rangle]_{A}
|0\rangle_{B},\\
\end{aligned}
\end{equation}
where $t=0,\,\,1,\,\,2$; $\omega=e^{\frac{2\pi \sqrt{-1}}{3}}$. It is easy to see that the states in Eqs. (52) are identical to the states in Eqs. (1).

\begin{theorem}
In $\mathbb{C}^{m}\otimes\mathbb{C}^{n}$ quantum system, the following $2(m+n)-4$ OPSs
\begin{equation}
\nonumber
\begin{aligned}
|\psi_{1+2\delta}\rangle=|0\rangle_{A}[|2\delta\rangle+|(2\delta+1)\rangle]_{B},\\
|\psi_{2+2\delta}\rangle=|0\rangle_{A}[|2\delta\rangle-|(2\delta+1)\rangle]_{B},\\
|\psi_{t+n}\rangle=(\sum_{j=0}^{2}\omega^{tj}|j\rangle)_{A}|(n-1)\rangle_{B},\\
|\psi_{n+2\sigma+3}\rangle=[|(2\sigma+3)\rangle+|(2\sigma+4)\rangle]_{A}
|(n-1)\rangle_{B},\\
|\psi_{n+2\sigma+4}\rangle=[|(2\sigma+3)\rangle-|(2\sigma+4)\rangle]_{A}
|(n-1)\rangle_{B},\\
|\psi_{n+m+2\delta-1}\rangle=|(m-1)\rangle_{A}[|(2\delta+1)\rangle+|(2\delta+2)\rangle]_{B},\\
|\psi_{n+m+2\delta}\rangle=|(m-1)\rangle_{A}[|(2\delta+1)\rangle-|(2\delta+2)\rangle]_{B},\\
|\psi_{t+2n+m-2}\rangle=[\sum_{j=0}^{2}\omega^{tj}|(j+1)\rangle]_{A}|0\rangle_{B},\\
|\psi_{2n+m+2\sigma+1}\rangle=[|(2\sigma+4)\rangle+|(2\sigma+5)\rangle]_{A}
|0\rangle_{B},\\
|\psi_{2n+m+2\sigma+2}\rangle=[|(2\sigma+4)\rangle-|(2\sigma+5)\rangle]_{A}
|0\rangle_{B},\\
\end{aligned}
\end{equation}
cannot be perfectly distinguished by LOCC, where $m=2d_{1}\geq4$ and $n=2d_{2}+1\geq3$; $\omega=e^{\frac{2\pi \sqrt{-1}}{3}}$, $t=0,\,\,1,\,\,2;$ $\delta=0,$ $1,$ $2,$ $\cdots,$ $d_{2}-1$; and $\sigma=0,\,\, 1,\,\, 2,\,\, \cdots,\,\, d_{1}-3$.
\end{theorem}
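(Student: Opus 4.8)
The plan is to follow, once more, the method of Refs.~\cite{Walgate2002,Sixia2015} used throughout Theorems~1--5: show that whichever party begins an LOCC protocol can only perform a measurement all of whose POVM elements are proportional to the identity operator, i.e.\ a trivial measurement, which already forces local indistinguishability. The only new feature is that the two sides of this set are built differently. On Alice's side ($m=2d_{1}$ even) each of the two ``column'' families mixes a cube-root-of-unity block of the kind used in Theorem~1 with ordinary $\pm$ ``domino'' pairs of the kind used in Theorem~4; on Bob's side ($n=2d_{2}+1$ odd) the two families are built from $\pm$ pairs only. Hence the analysis of Alice's measurement combines the Vandermonde manipulation of Theorem~1 with the domino manipulation of Theorem~4, while the analysis of Bob's measurement is literally the interlocking all-$\pm$ argument of Theorem~4 with $m$ replaced by $n$.

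First I would treat Alice's measurement; write $M_{k}^{\dag}M_{k}=(a_{\alpha\beta}^{k})$ in the computational basis of $\mathbb{C}^{m}$. \emph{Boundary step.} The state $|\psi_{1}\rangle=|0\rangle_{A}(|0\rangle+|1\rangle)_{B}$ is orthogonal, on Alice's side only, to every state with Bob part $|0\rangle_{B}$; the Alice parts of those states form an $\omega$-block on $\{|1\rangle,|2\rangle,|3\rangle\}$ together with $\pm$ pairs exhausting $\{|4\rangle,\dots,|m-1\rangle\}$, so Lemmas~1--3 applied to the three $\omega$-equations, and the trivial $2\times2$ systems coming from the $\pm$ pairs, give $a_{0j}^{k}=0$ for $j=1,\dots,m-1$, and the reversed orthogonalities give $a_{j0}^{k}=0$. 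Symmetrically the member $|(m-1)\rangle_{A}(|n-2\rangle+|n-1\rangle)_{B}$ of the $|(m-1)\rangle_{A}$ family --- the one whose Bob register reaches $|n-1\rangle$ --- is orthogonal on Alice's side to every state with Bob part $|n-1\rangle_{B}$, whose Alice parts form an $\omega$-block on $\{|0\rangle,|1\rangle,|2\rangle\}$ and $\pm$ pairs exhausting $\{|3\rangle,\dots,|m-2\rangle\}$, yielding $a_{m-1,j}^{k}=a_{j,m-1}^{k}=0$ for $j=0,\dots,m-2$. So $M_{k}^{\dag}M_{k}$ becomes block-diagonal with a central block on $\{|1\rangle,\dots,|m-2\rangle\}$. \emph{Interior step.} The two column families, $\{|\psi_{t+n}\rangle\}\cup\{|\psi_{n+2\sigma+3}\rangle,|\psi_{n+2\sigma+4}\rangle\}$ and $\{|\psi_{t+2n+m-2}\rangle\}\cup\{|\psi_{2n+m+2\sigma+1}\rangle,|\psi_{2n+m+2\sigma+2}\rangle\}$, are sets of mutually orthogonal Alice vectors that share a common Bob label, hence remain mutually orthogonal after $M_{k}$. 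Within each family: orthogonality of the $\omega$-block against each $\pm$ pair forces (Lemmas~1--3) all entries joining the two blocks to vanish; orthogonality of distinct $\pm$ pairs kills the cross-pair entries as in Eqs.~(50)--(51); orthogonality inside one $\pm$ pair gives $a_{ii}^{k}=a_{i+1,i+1}^{k}$ and $a_{i,i+1}^{k}=a_{i+1,i}^{k}$; and orthogonality inside the $\omega$-block, combined with the vanishing (obtained above) of the boundary rows/columns it meets and of the entries linking it to the $\pm$ pairs, forces its diagonal entries to equal $a_{00}^{k}$ and its off-diagonal entries to vanish, exactly as Eqs.~(13)--(19) do. Since the block patterns of the two families interlock (the $\omega$-blocks on $\{|0\rangle,|1\rangle,|2\rangle\}$ and $\{|1\rangle,|2\rangle,|3\rangle\}$ overlap on two levels, and the $\pm$ pairs are $(3,4),(5,6),\dots$ in one family and $(4,5),(6,7),\dots$ in the other), chaining these relations gives $a_{00}^{k}=a_{11}^{k}=\cdots=a_{m-1,m-1}^{k}$ and leaves no off-diagonal entry alive. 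Thus $M_{k}^{\dag}M_{k}=a_{00}^{k}I_{m}$ for every $k$.

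For Bob, the states in which his vector varies are the $|0\rangle_{A}$ family with $\pm$ pairs $(0,1),(2,3),\dots,(n-3,n-2)$ and the $|(m-1)\rangle_{A}$ family with $\pm$ pairs $(1,2),(3,4),\dots,(n-2,n-1)$, i.e.\ exactly the interlocking all-$\pm$ configuration of Theorem~4 with $m$ replaced by the odd $n$, so that proof carries over; the only extra input is the two boundary probes, which here come from the Alice-varying states. The state $|\psi_{t+n}\rangle$ (Bob part $|n-1\rangle_{B}$, Alice part an $\omega$-combination touching $|0\rangle_{A}$), played against the $|0\rangle_{A}$ family, kills row and column $n-1$ of $N_{k}^{\dag}N_{k}$; and $|\psi_{t+2n+m-2}\rangle$ when $m=4$, or the extreme $\pm$ pair $|\psi_{2n+m+2(d_{1}-3)+1}\rangle,|\psi_{2n+m+2(d_{1}-3)+2}\rangle$ when $m\geq 6$ (Alice part touching $|m-1\rangle_{A}$), played against the $|(m-1)\rangle_{A}$ family, kills row and column $0$. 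Hence every $N_{k}^{\dag}N_{k}$ is a scalar matrix too, and since whoever goes first can only measure trivially, the set cannot be perfectly distinguished by LOCC.

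I expect the genuine work to be the interior step for Alice: one must verify that the $\omega$-block equations, taken together with the already-established vanishing of the relevant boundary row and column and of the links to the $\pm$ pairs, really do pin the corresponding $2\times2$ (for $m=4$, $3\times3$) $\omega$-block down to a scalar with no residual off-diagonal freedom --- this is precisely the nontrivial kernel of Theorem~1's proof --- and that the index bookkeeping for the two interlocking families leaves no diagonal slot unmatched and no off-diagonal entry unkilled, including the smallest cases $m=4,6$ and $n=3,5$; the parallel verification for Bob, in particular the availability of the two probe states when $m\geq6$, is routine but has to be carried out.
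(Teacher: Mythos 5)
Your proposal is correct and follows exactly the route the paper itself prescribes for this theorem (which it states only as a one-line remark rather than writing out): Alice's trivial-measurement argument is the $\omega$-block-plus-dominoes analysis of Theorem 5 (Appendix B), and Bob's is the interlocking all-$\pm$ domino analysis of Theorem 4. Your identification of the correct boundary probes on Bob's side --- the $\omega$-block state $|\psi_{t+2n+m-2}\rangle$ when $m=4$ versus the extreme $\pm$ pair when $m\geq 6$ --- is the one detail the paper leaves implicit, and you have it right.
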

\begin{figure}
\small
\centering
\includegraphics[scale=0.6]{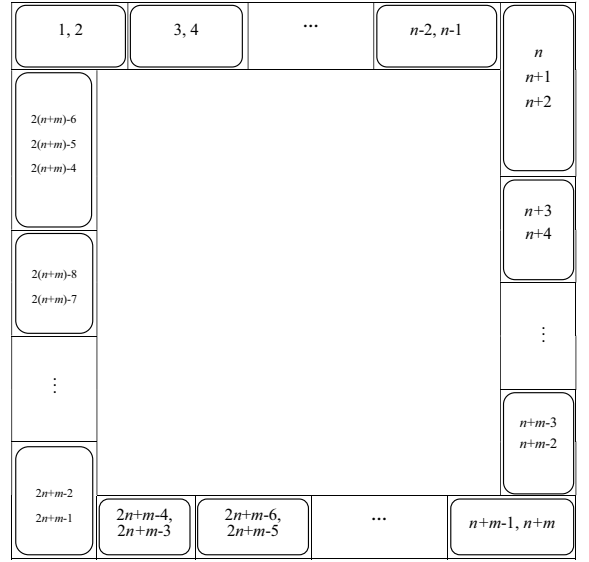}
\caption{A nonlocal set of $2(m+n)-4$ OPSs in $m\otimes n$ for $m=2d_{1}$ and $n=2d_{2}+1$. }
\end{figure}

The structure of the states in Theorem 6 is showed in FIG. 7. We can prove Theorem 7 with the proof methods of Theorem 4 and Theorem 5. That is, we prove Alice only can perform a trivial measurement as we do in Theorem 5 while we prove Bob only can perform a trivial measurement as we do in Theorem 4. By Eqs. (31), Theorem 1 and Theorem 3-6, we get a general conclusion, \emph{i.e.}, Theorem 7.
\begin{theorem}
In $\mathbb{C}^{m}\otimes\mathbb{C}^{n}$ quantum system, there exists $2(m+n)-4$ OPSs that cannot be perfectly discriminated by LOCC, where $m\geq 3$ and $n\geq3$.
\end{theorem}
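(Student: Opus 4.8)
The plan is to obtain Theorem 7 as an immediate corollary of Theorems 4, 5 and 6 by splitting into cases according to the parities of $m$ and $n$. Since $m\geq 3$ and $n\geq 3$, an even value of either parameter is automatically $\geq 4$ and an odd value is automatically $\geq 3$, so in every case the hypotheses of one of the earlier theorems are met in at least one orientation.

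First I would treat the case in which both $m$ and $n$ are odd: writing $m=2d_{1}+1$ and $n=2d_{2}+1$ with $d_{1},d_{2}\geq 1$, Theorem 4 exhibits exactly $2(m+n)-4$ orthogonal product states in $\mathbb{C}^{m}\otimes\mathbb{C}^{n}$ that cannot be perfectly distinguished by LOCC, which is precisely the assertion (in particular the case $m=n=3$ recovers the eight Bennett--Feng states of Eqs. (31), and $m=n=5$ recovers Theorem 3). Next, when both $m$ and $n$ are even, write $m=2d_{1}$ and $n=2d_{2}$ with $d_{1},d_{2}\geq 2$; Theorem 5 supplies the required $2(m+n)-4$ locally indistinguishable OPSs (the case $m=n=4$ being the set of Eqs. (1)/(52) from Theorem 1). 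When $m$ is even and $n$ is odd, Theorem 6 with $m=2d_{1}\geq 4$ and $n=2d_{2}+1\geq 3$ does the same.

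The only remaining configuration is $m$ odd and $n$ even. Here I would invoke the symmetry of the problem under interchange of the two subsystems: apply Theorem 6 in $\mathbb{C}^{n}\otimes\mathbb{C}^{m}$ (legitimate, since now the first factor $n$ is even and the second factor $m$ is odd) and then relabel $A\leftrightarrow B$ in every product state; this yields $2(n+m)-4=2(m+n)-4$ orthogonal product states in $\mathbb{C}^{m}\otimes\mathbb{C}^{n}$, and local distinguishability is unchanged under this relabelling because LOCC is symmetric in the two parties. Combining the four cases covers every pair $(m,n)$ with $m,n\geq 3$, which completes the proof.

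I do not expect a genuine obstacle; the one point needing a little care is checking that the index ranges in Theorems 4--6 (the parameters $\delta$ and $\sigma$, and in the even--even case the auxiliary three-dimensional blocks) degenerate gracefully at the boundary dimensions---for example $\delta$ or $\sigma$ ranging over an empty set when $d_{1}$ or $d_{2}$ is as small as permitted---so that the displayed families still contain exactly $2(m+n)-4$ states there. This counting check is already implicit in the statements of those theorems, so the corollary follows at once.
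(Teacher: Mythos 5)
Your proposal is correct and matches the paper's own (very terse) derivation of Theorem 7, which likewise assembles it from Theorems 4, 5 and 6 according to the parities of $m$ and $n$; your explicit handling of the remaining odd--even case by swapping the roles of Alice and Bob is exactly the symmetry the paper leaves implicit. Your closing remark about the index ranges degenerating at the boundary dimensions (e.g.\ $\delta,\sigma$ ranging over an empty set when $d_{1}$ or $d_{2}$ is minimal, so that the counts still come out to $2(m+n)-4$) is a worthwhile check that the paper does not spell out, but it does hold.
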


In fact, the sets of OPSs in Theorem 2, 4, 5 or 6 can be extended to complete orthogonal product bases by adding the OPSs $|i\rangle|j\rangle$, where $1\leq i\leq m-2$ and $1\leq j\leq n-2$. This means that these sets are completable. For bipartite systems, there are many constructions of locally indistinguishable sets of orthogonal product states. Some of them are complete bases \cite{Bennett1999,Zhang2014}, while some of them are incomplete \cite{CHB1999,DiVincenzo2003,SHalder2019,Fei2020,Sixia2015,
Wang2015,Zhang2015,XZhang2016,Wang2015,Zhang2016,Zhangxiaoqian2017,WangYLING2017}. Among the incomplete sets, several sets are completable, a few sets are uncompletable, and rest of them are unextendible product bases. TABLE \uppercase\expandafter{\romannumeral1} gives the detailed indicators of different sets.

Our nonlocal sets of OPSs have fewer elements than the existing results \cite{XZhang2016,Zhangxiaoqian2017,Wang2015,Zhang2016} in a general $\mathbb{C}^{m}\otimes\mathbb{C}^{n}$ quantum system
when a ``stoper" state \cite{DiVincenzo2003} is not include. That is, we give the minimum number of elements to form a completable set of OPSs that cannot be perfectly distinguished by LOCC in arbitrary bipartite system. In fact, $2(m+n)-4$ is just the minimum number to form a completable and nonlocal set of OPSs in $\mathbb{C}^{m}\otimes\mathbb{C}^{n}$ quantum system for $m\geq 3$ and $n\geq 3$ so far. Most importantly, the structure of our nonlocal set of OPSs is symmetrical and more intuitive.

\begin{widetext}
\begin{center}
\begin{table}[htpp]
\centering
\caption{Comparison of different constructions of locally indistinguishable sets of bipartite orthogonal product states.}
\label{tab:ModelASymbol}
\begin{tabular}{cccccc}
\hline

Different sets           &The number of elements            &system  &Constraints of parameters   &type 1 &type 2 \\
\hline
Construction of Ref. \cite{Bennett1999}   &9               &$\mathbb{C}^{3}\otimes\mathbb{C}^{3}$   &- &complete  &- \\
Construction of Ref. \cite{Zhang2014} & $d^{2}$  &$\mathbb{C}^{d}\otimes\mathbb{C}^{d}$   &$d$ is odd and $d\geq 3$  &complete  &-\\
Construction of Ref. \cite{CHB1999}  & 3  &$\mathbb{C}^{3}\otimes\mathbb{C}^{3}$ &- &incomplete &unexpendible\\
Construction of Ref. \cite{DiVincenzo2003} &$d^{2}-2d+1$   &$\mathbb{C}^{d}\otimes\mathbb{C}^{d}$ &$d$ is even and $d\geq 4$ &incomplete &unexpendible\\
Construction of Ref. \cite{SHalder2019} &$d^{2}-2d+2$ &$\mathbb{C}^{d}\otimes\mathbb{C}^{d}$ &$d$ is odd and $d\geq 3$ &incomplete &unexpendible\\
Construction of Ref. \cite{Fei2020} &$mn-4\lfloor\frac{m-1}{2}\rfloor$ &$\mathbb{C}^{m}\otimes\mathbb{C}^{n}$ &$3\leq m \leq n$ &incomplete &unexpendible\\
Construction of Ref. \cite{Sixia2015}    &2$d$-1 &$\mathbb{C}^{d}\otimes\mathbb{C}^{d}$   &$d\in Z$ and $d\geq 3$        &incomplete &uncompletable \\
Construction of Ref. \cite{Wang2015} &$6d-9$ &$\mathbb{C}^{d}\otimes\mathbb{C}^{d}$   &$d$ is odd and $d\geq 3$ &incomplete &completable  \\
Construction of Ref. \cite{Zhang2015} &$4d-4$ &$\mathbb{C}^{d}\otimes\mathbb{C}^{d}$   &$d\in Z$ and $d\geq 3$ &incomplete &completable\\
Construction of Ref. \cite{XZhang2016}          &$mn$  &$\mathbb{C}^{m}\otimes\mathbb{C}^{n}$   &$m\geq 3$ and $n\geq 3$  &incompete &completable\\
 Construction of Ref. \cite{Wang2015}                &$3m+3n-9$   &$\mathbb{C}^{m}\otimes\mathbb{C}^{n}$   &$m\geq 3$ and $n\geq 3$  &incomplete &completable\\
Construction of Ref. \cite{Zhang2016}               &$3n+m-4$    &$\mathbb{C}^{m}\otimes\mathbb{C}^{n}$   &$3\leq m\leq n$ &incomplete &completabe \\
Construction of Ref. \cite{Zhang2016}               &$2n-1$    &$\mathbb{C}^{m}\otimes\mathbb{C}^{n}$   &$3\leq m\leq n$ &incomplete &uncompletabe \\
Construction of Ref. \cite{WangYLING2017}               &$2n-1$    &$\mathbb{C}^{m}\otimes\mathbb{C}^{n}$   &$4\leq m\leq n$ &incomplete &uncompletabe \\
Construction of Ref. \cite{Zhangxiaoqian2017}       &$3m+3n-8$   &$\mathbb{C}^{m}\otimes\mathbb{C}^{n}$  &$4\leq m \leq n$  &incomplete &uncompletabe\\
Our sets in Theorem 4-6                        &$2n+2m-4$   &$\mathbb{C}^{m}\otimes\mathbb{C}^{n}$   &$m\geq3$ and $n\geq 3$  &incomplete &completable\\
\hline
\end{tabular}
\end{table}
\end{center}
\end{widetext}

\section{isomorphism of two sets of orthogonal product states}

In Ref. \cite{DiVincenzo2003}, DiVincenzo \emph{et al.} gave the concept of orthogonality graph to describe the structure of a set of orthogonal product states in a bipartite Hilbert space.
\begin{definition}\cite{DiVincenzo2003} Let $H=H_{A}\otimes H_{B}$ is a bipartite Hilbert space with dim $H_{A}$=dim $H_{B}$. Let $S=\{|\psi_{j}\rangle \equiv |\varphi_{Aj}\rangle\otimes|\varphi_{Bj}\rangle|$$j=1$, $2$, $\cdots$, $s\}$ is a set of orthogonal product states in $H$. We represent $S$ as a graph $G=(V, E_{A}\cup E_{B})$, where $E_{A}$ and $E_{B}$ are the sets of edges. Each state $|\psi_{j}\rangle\in S$ is represented as a vertex of $V$. If the states $|\psi_{t}\rangle$ and $|\psi_{i}\rangle$ are orthogonal on $H_{A}$ $(H_{B})$, there will be a solid (dotted) line between the vertices $v_{t}$ and $v_{i}$.
\end{definition}
Now we give the definition of isomorphism of two sets, which will be used to describe the relations between different sets of locally indistinguishable OPSs.
\begin{definition}
In $\mathbb{C}^{m}\otimes \mathbb{C}^{n}$ quantum system, we say two sets of orthogonal product states are isomorphic if their orthogonality graphs are same both on Alice's side and Bob's side; Otherwise, we say these two sets are not isomorphic.
\end{definition}
The orthogonality graph of the states in Eqs. (31) on Alice's side is showed in FIG. 8. In FIG. 8, the vertex $j$ denotes the sates $|\phi_{j}\rangle$ for $j=1,2,\cdots,8.$
\begin{figure}
\small
\centering
\includegraphics[scale=0.6]{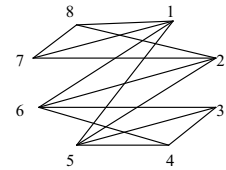}
\caption{The orthogonality graph of the states in Eqs. (31) on Alice's side.}
\end{figure}

  In Refs. \cite{Zhang2015}, Zhang \emph{et al.} constructed a nonlocal set of OPSs in
$\mathbb{C}^{d}\otimes\mathbb{C}^{d}$ quantum system. When $d=3$, we can get a nonlocal set of OPSs in $\mathbb{C}^{3}\otimes\mathbb{C}^{3}$ quantum system, \emph{i.e.},
\begin{equation}
\begin{aligned}
|\psi_{1}\rangle=|1\rangle_{A}(|0\rangle+|1\rangle)_{B}\\
|\psi_{2}\rangle=|1\rangle_{A}(|0\rangle-|1\rangle)_{B}\\
|\psi_{3}\rangle=|2\rangle_{A}(|0\rangle+|2\rangle)_{B}\\
|\psi_{4}\rangle=|2\rangle_{A}(|0\rangle-|2\rangle)_{B}\\
|\psi_{5}\rangle=(|0\rangle+|1\rangle)_{A}|2\rangle_{B}\\
|\psi_{6}\rangle=(|0\rangle-|1\rangle)_{A}|2\rangle_{B}\\
|\psi_{7}\rangle=(|0\rangle+|2\rangle)_{A}|1\rangle_{B}\\
|\psi_{8}\rangle=(|0\rangle-|2\rangle)_{A}|1\rangle_{B}
\end{aligned}.
\end{equation}
\begin{figure}
\small
\centering
\includegraphics[scale=0.6]{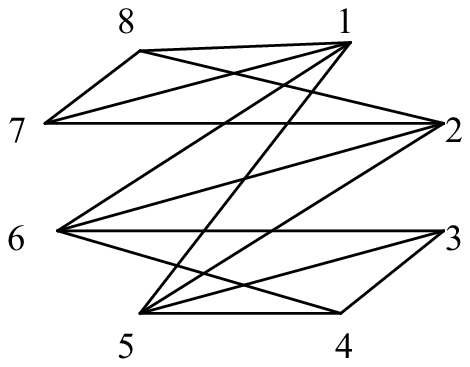}
\caption{The orthogonality graph of the states in Eqs. (54) on Alice's side.}
\end{figure}
Here we give a mapping between the states in Eqs. (53) and the serial numbers of the vertices, \emph{i.e.,}
\begin{equation}
\begin{aligned}
|\psi_{1}\rangle\mapsto1,&&|\psi_{2}\rangle\mapsto2,&&|\psi_{5}\rangle\mapsto3,&&
|\psi_{6}\rangle\mapsto4,\\
|\psi_{3}\rangle\mapsto5,&&|\psi_{4}\rangle\mapsto6,&&|\psi_{7}\rangle\mapsto7,&&
|\psi_{8}\rangle\mapsto8.\\
\end{aligned}
\end{equation}

\begin{figure}
\small
\centering
\includegraphics[scale=0.6]{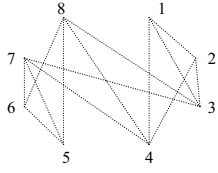}
\caption{The orthogonality graph of the states in Eqs. (31) or Eqs. (53) on Bob's side.}
\end{figure}

By the mapping relation in Eqs. (54), we give the orthogonality graph of the states in Eqs. (53) on Alice's subsystem (see FIG. 9). It is easy to see that FIG. 8 and FIG. 9 are identical. This means that the two sets of orthogonal product states have the same orthogonal graph on Alice's side. Similarly, we can find the set of the states in Eqs. (31) and the set of the states in Eqs. (53) have the same orthogonal graph (see FIG. 10) on Bob's side. Therefore, these two sets are isomorphic.
\begin{figure}
\small
\centering
\includegraphics[scale=0.6]{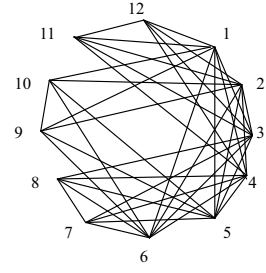}
\caption{The orthogonality graph of Zhang \emph{et al}.'s states on Alice's side.}
\end{figure}
\begin{figure}
\small
\centering
\includegraphics[scale=0.6]{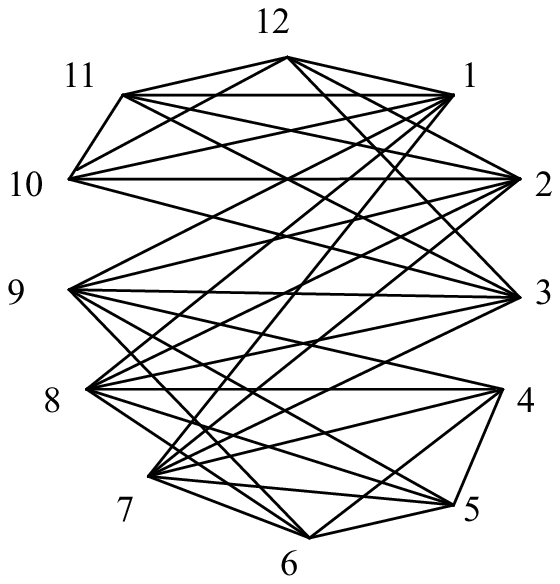}
\caption{The orthogonality graph of our states in Eqs. (2) on Alice's side.}
\end{figure}
\begin{figure}
\small
\centering
\includegraphics[scale=0.6]{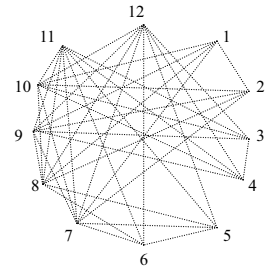}
\caption{The orthogonality graph of zhang \emph{et al's} states on Bob's side.}
\end{figure}
\begin{figure}
\small
\centering
\includegraphics[scale=0.6]{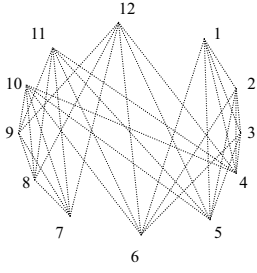}
\caption{The orthogonality graph of our states on Bob's side.}
\end{figure}

Now we consider the orthogonality graphs of Zhang \emph{et al.}'s set and ours in $\mathbb{C}^{4}\otimes\mathbb{C}^{4}$ quantum system. The states constructed by Zhang \emph{et al.}'s method are as follows (see Eq. (55)).
\begin{equation}
\begin{aligned}
|\psi_{1}\rangle=|1\rangle_{A}(|0\rangle+|1\rangle)_{B},\\
|\psi_{2}\rangle=|1\rangle_{A}(|0\rangle-|1\rangle)_{B},\\
|\psi_{3}\rangle=|2\rangle_{A}(|0\rangle+|2\rangle)_{B},\\
|\psi_{4}\rangle=|2\rangle_{A}(|0\rangle-|2\rangle)_{B},\\
|\psi_{5}\rangle=|3\rangle_{A}(|0\rangle+|3\rangle)_{B},\\
|\psi_{6}\rangle=|3\rangle_{A}(|0\rangle-|3\rangle)_{B},\\
|\psi_{7}\rangle=(|0\rangle+|1\rangle)_{A}|2\rangle_{B},\\
|\psi_{8}\rangle=(|0\rangle-|1\rangle)_{A}|2\rangle_{B},\\
|\psi_{9}\rangle=(|0\rangle+|2\rangle)_{A}|3\rangle_{B},\\
|\psi_{10}\rangle=(|0\rangle-|2\rangle)_{A}|3\rangle_{B},\\
|\psi_{11}\rangle=(|0\rangle+|3\rangle)_{A}|1\rangle_{B},\\
|\psi_{12}\rangle=(|0\rangle-|3\rangle)_{A}|1\rangle_{B}.
\end{aligned}
\end{equation}
The orthogonality graphs of the states constructed by Zhang \emph{et al.}'s method are exhibited in FIG. 11 and FIG. 13. The orthogonality graphs of our construction, \emph{i.e.}, the states in Eqs. (52), are showed in FIG. 12 and FIG. 14. Obviously, FIG. 11 has 39 edges while FIG. 12 has 33 edges. This means that the orthogonality graph of Zhang \emph{et al.}'s states in $\mathbb{C}^{4}\otimes \mathbb{C}^{4}$ quantum system and ours are not identical on Alice's side. So do they on Bob's side. Thus the set of the states constructed by Zhang \emph{et al.} is not isomorphic with the states in Eqs. (52). But why does this happen? This is because any two states of our set in $\mathbb{C}^{4}\otimes \mathbb{C}^{4}$ quantum system are orthogonal only on one side while some states of Zhang \emph{et al.} are orthogonal both on Alice's side and on Bob's side. For example, $|\psi_{3}\rangle$ and $|\psi_{11}\rangle$, $|\psi_{5}\rangle$ and $|\psi_{7}\rangle$ in Eqs. (56).

By Theorem 2 and Theorem 4, we can know that the subset of Shi \emph{et al}.'s UPB is same as the set constructed by our novel method in $\mathbb{C}^{3}\otimes \mathbb{C}^{3}$ quantum system. On the other hand, the subset (See Eqs. (1)) of Shi \emph{et al}.'s UPB in $\mathbb{C}^{4}\otimes \mathbb{C}^{4}$ is same as ours (See Eqs.(52)). Thus the subset of Shi \emph{et al}.'s UPB is isomorphic to ours both in $\mathbb{C}^{3}\otimes \mathbb{C}^{3}$ and $\mathbb{C}^{4}\otimes \mathbb{C}^{4}$. Now we consider the case in $\mathbb{C}^{5}\otimes \mathbb{C}^{5}$. By Theorem 2, we get the subset of Shi \emph{et al}'s UPB as follow, \emph{i.e.},

\begin{equation}
\begin{aligned}
|\phi_{t+1}\rangle=|0\rangle_{A}[\sum_{j=0}^{3}\omega^{t j}|j\rangle]_{B},\\
|\phi_{t+5}\rangle=[\sum_{j=0}^{3}\omega^{t j}|j\rangle]_{A}|4\rangle_{B},\\
|\phi_{t+9}\rangle=|4\rangle_{A}[\sum_{j=0}^{3}\omega^{t j}|(j+1)\rangle]_{B},\\
|\phi_{t+13}\rangle=[\sum_{j=0}^{3}\omega^{t j}|(j+1)\rangle]_{A}|0\rangle_{B},
\end{aligned}
\end{equation}
where $m\geq3$, $n\geq3$, $\omega=e^{\frac{2\pi \sqrt{-1}}{4}}$, $t=0,\,1,\,2,\,3$. The orthogonality graph of these states on Alice's side is shown in FIG. 15.
\begin{figure}
\small
\centering
\includegraphics[scale=0.6]{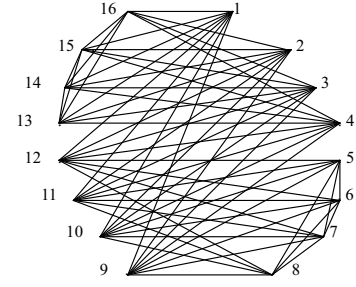}
\caption{The orthogonality graph of our states on Alice's side.}
\end{figure}
The orthogonality graph of our novel states in Eqs. (32) on Alice's side is shown in FIG. 16.
\begin{figure}
\small
\centering
\includegraphics[scale=0.6]{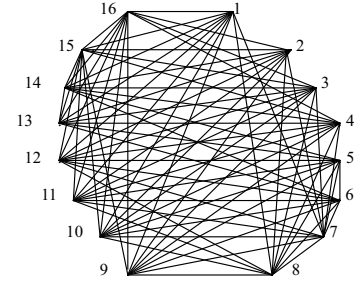}
\caption{The orthogonality graph of the states in Eqs. (57) on Alice's side.}
\end{figure}
Since FIG. 15 and FIG. 16 are different, we know that the two sets corresponding to Fig. 15 and Fig. 16 are not isomorphic. That is, the subset of Shi \emph{et al}.'s UPB and the set constructed by our novel method are not isomorphic in $\mathbb{C}^{5} \otimes \mathbb{C}^{5}$ quantum system. The same case exists in $\mathbb{C}^{m} \otimes \mathbb{C}^{n}$ quantum system for $m\geq 6$ and $n\geq 6$.

\section{Conclusion}

QNWE is a peculiar phenomenon in nature. Many people are devoted to this research work \cite{Xin2008,MA2014,Chen2004,Rinaldis,Zhangzc2015,Bravyi2004,Johnston2014,Chen2015,
Nathaniel2013} since it is a significant thing to improve quantum information theory. In addition, many results are used to design quantum cryptographic protocols \cite{Yang2015,DHJQ2020,Rahaman2015}.

It is interesting to find out the minimum number of elements to form a nonlocal set of OPSs. We found a subset of Shi \emph{et al.}'s UPB, which only has $2(m+n)-4$ member, cannot be perfectly distinguished by LOCC in $\mathbb{C}^{m}\otimes\mathbb{C}^{n}$ quantum system. On the other hand, we give a novel method to construct a nonlocal set
in $\mathbb{C}^{m}\otimes\mathbb{C}^{n}$ quantum system for $m\geq 3$ and $n\geq 3$.
As we know, $2(m+n)-4$ is the minimum number of elements to form a completable set of OPSs that cannot be perfectly distinguished by LOCC in $\mathbb{C}^{m}\otimes\mathbb{C}^{n}$ quantum system for $m\geq 3$ and $n\geq 3$. Furthermore, we give the concept of isomorphism for two nonlocal sets of OPSs.
We analyze the structures of two sets of OPSs from the perspective of isomorphism. All the results are useful to perfect the theory of quantum nonlocality.
\begin{acknowledgments}The authors are grateful for the anonymous referee's
suggestions to improve the quality of this paper. This work is supported by Natural Science Foundation of Shandong province of China (Grants No. ZR2019MF023), Open Foundation of State Key Laboratory of Networking
and Switching Technology (Beijing University of Posts and Telecommunications)
(SKLNST-2019-2-01) and SDUST Research Fund.
\end{acknowledgments}
\appendix
\section{Proof of Theorem 2}
\begin{proof}
Suppose that Alice firstly performs a POVM measurement with a set of general $m\times m$ POVM elements $\{M_{k}^{\dag}M_{k}: k=1,\,2,\,\cdots,\,l\}$, where
  \begin{equation}
\begin{split}
M_{k}^{\dag}M_{k}=
\left[
  \begin{array}{ccccc}
    a_{00}^{k}      &a_{01}^{k}    &\cdots &a_{0(m-1)}^{k}\\
    a_{10}^{k}     &a_{11}^{k}     &\cdots &a_{1(m-1)}^{k}\\
    \vdots         &\vdots         &\ddots &\vdots\\
    a_{(m-1)0}^{k} &a_{(m-1)1}^{k} &\cdots &a_{(m-1)(m-1)}^{k}\\
  \end{array}
\right]
\end{split}\nonumber
\end{equation}
in the basis $\{|0\rangle,$ $|1\rangle,$ $\cdots,$ $|(m-1)\rangle\}$. The post-measurement states $\{(M_{k}\otimes I_{n\times n})|\phi_{j}\rangle:$ $j=1,$ $2$, $\cdots$, $2(m+n)-4\}$ should be mutually orthogonal to make further discrimination possible.

Because $(M_{k}\otimes I_{n\times n})|\phi_{1}\rangle$ should be orthogonal to $(M_{k}\otimes I_{n\times n})|\phi_{\eta+2n+m-2}\rangle$ on Alice's side for $\eta=0,\,1,\,\cdots,\,m-2$, we get
\begin{equation}
\left\{
\begin{aligned}
\sum_{j=1}^{m-1}a_{0j}^{k}=0\qquad\qquad\quad\\
\sum_{j=1}^{m-1}(\omega_{2})^{j-1} a_{0j}^{k}=0\qquad\\
\sum_{j=1}^{m-1}[(\omega_{2})^{2}]^{j-1} a_{0j}^{k}=0\quad\\
\vdots\,\,\,\,\,\,\,\,\,\,\,\,\,\,\,\,\,\,\,\,\,\,\,\,\,\,\,\\
\sum_{j=1}^{m-1}[(\omega_{2})^{m-2}]^{j-1} a_{0j}^{k}=0\\
\end{aligned}
\right.
\end{equation}
and
\begin{equation}
\left\{
\begin{aligned}
\sum_{j=1}^{m-1}a_{j0}^{k}=0\qquad\qquad\,\,\,\\
\sum_{j=1}^{m-1}\overline{\omega_{2}}^{j-1}a_{j0}^{k}=0\qquad\\
\sum_{j=1}^{m-1}(\overline{\omega_{2}}^{2})^{j-1}a_{j0}^{k}=0\quad\\
\vdots\,\,\,\,\,\,\,\,\,\,\,\,\,\,\,\,\,\,\,\,\,\,\,\,\,\,\,\,\\
\sum_{j=1}^{m-1}(\overline{\omega_{2}}^{m-2})^{j-1}a_{j0}^{k}=0\\
\end{aligned}
\right.,
\end{equation}
where $\overline{\omega_{2}}$ is the conjugate complex number of $\omega_{2}$. Since the coefficient determinants of the two systems of equations (A1) and (A2) are not equal to zero by Lemma 1 and Lemma 3, \emph{i.e.},
 \begin{equation}
\nonumber
\begin{split}
\left|
  \begin{array}{ccccc}
    1 &1                 &1                        &\cdots &1\\
    1 &\omega_{2}        &(\omega_{2})^{2}         &\cdots &(\omega_{2})^{m-2}\\
    1 &(\omega_{2})^{2}  &[(\omega_{2})^{2}]^{2}   &\cdots &[(\omega_{2})^{2}]^{m-2}\\
    \vdots   &\vdots     &\vdots                   &\ddots &\vdots\\
    1 &(\omega_{2})^{m-2}&[(\omega_{2})^{m-2}]^{2} &\cdots &[(\omega_{2})^{m-2}]^{m-2}
  \end{array}
\right|
\neq 0,
\end{split}
\end{equation}
\begin{equation}
\nonumber
\begin{split}
\left|
  \begin{array}{ccccc}
    1 &1             &1                     &\cdots &1\\
    1 &\overline{\omega_{2}} &\overline{\omega_{2}}^{2} &\cdots &\overline{\omega_{2}}^{m-2}\\
    1        &\overline{\omega_{2}}^{2}  &(\overline{\omega_{2}}^{2})^{2}    &\cdots &(\overline{\omega_{2}}^{2})^{m-2}\\
    \vdots   &\vdots        &\vdots                &\ddots &\vdots\\
    1        &\overline{\omega_{2}}^{m-2}  &(\overline{\omega_{2}}^{m-2})^{2}    &\cdots &(\overline{\omega_{2}}^{m-2})^{m-2}
  \end{array}
\right|
\neq 0,
\end{split}
\end{equation}
we have the unique solutions of Eqs. (A1) and (A2), respectively, \emph{i.e.},
\begin{eqnarray}
\begin{aligned}
a_{01}^{k}=a_{02}^{k}=\cdots=a_{0(m-1)}^{k}=0
\end{aligned}
\end{eqnarray}
and
\begin{eqnarray}
\begin{aligned}
a_{10}^{k}=a_{20}^{k}=\cdots=a_{(m-1)0}^{k}=0
\end{aligned}
\end{eqnarray}
by Lemma 2. Similarly, because $(M_{k}\otimes I_{n\times n})|\phi_{n+m-1}\rangle$ should be orthogonal to $(M_{k}\otimes I_{n\times n})|\phi_{\eta+n}\rangle$ on Alice's side for $\eta=0$, $1$, $\cdots$, $m-2$, we have
\begin{equation}
\left\{
\begin{aligned}
\sum_{j=0}^{m-2}a_{(m-1)j}^{k}=0\qquad\qquad\,\\
\sum_{j=0}^{m-2}(\omega_{2})^{j}a_{(m-1)j}^{k}=0\qquad\\
\sum_{j=0}^{m-2}[(\omega_{2})^{2}]^{j}a_{(m-1)j}^{k}=0\quad\\
\vdots\,\,\,\,\,\,\,\,\,\,\,\,\,\,\,\,\,\,\,\,\,\,\,\,\,\,\,\,\\
\sum_{j=0}^{m-2}[(\omega_{2})^{m-2}]^{j}a_{(m-1)j}^{k}=0
\end{aligned}
\right.
\end{equation} and
\begin{equation}
\left\{
\begin{aligned}
\sum_{j=0}^{m-2}a_{j(m-1)}^{k}=0\qquad\quad\,\,\\
\sum_{j=0}^{m-2}\overline{\omega_{2}}^{j}a_{j(m-1)}^{k}=0\qquad\\
\sum_{j=0}^{m-2}(\overline{\omega_{2}}^{2})^{j}a_{j(m-1)}^{k}=0\quad\\
\vdots\,\,\,\,\,\,\,\,\,\,\,\,\,\,\,\,\,\,\,\,\,\,\,\,\,\,\,\,\,\,\\
\sum_{j=0}^{m-2}(\overline{\omega_{2}}^{m-2})^{j}a_{j(m-1)}^{k}=0\\
\end{aligned}
\right..
\end{equation}
By Lemma 1-3, we get the unique solutions of Eqs. (A5) and (A6), respectively, \emph{i.e.},
\begin{eqnarray}
\begin{aligned}
a_{(m-1)0}^{k}=a_{(m-1)1}^{k}=\cdots=a_{(m-1)(m-2)}^{k}=0,
\end{aligned}
\end{eqnarray}
\begin{eqnarray}
\begin{aligned}
a_{0(m-1)}^{k}=a_{1(m-1)}^{k}=\cdots=a_{(m-2)(m-1)}^{k}=0.
\end{aligned}
\end{eqnarray}

Because the states in $\{(M_{k}\otimes I_{n\times n})|\phi_{\eta+2n+m-2}\rangle:$ $\eta=0,$ $1,$ $\cdots$, $m-2\}$ should be mutually orthogonal on Alice's side, we get the following $m-2$ systems of equations

\begin{equation}
\nonumber
\left\{
\begin{aligned}
\sum_{p=0}^{m-3}[(\omega_{2})^{p}\sum_{j=1}^{m-1}a_{j(p+1)}^{k}]
=-(\omega_{2})^{m-2}\sum_{j=1}^{m-1}a_{j(m-1)}^{k}\qquad\qquad\,\,\\
\sum_{p=0}^{m-3}\{[(\omega_{2})^{2}]^{p}\sum_{j=1}^{m-1}a_{j(p+1)}^{k}\}
=-[(\omega_{2})^{2}]^{m-2}\sum_{j=1}^{m-1}a_{j(m-1)}^{k}\qquad\\
\vdots\qquad \qquad \qquad \qquad \qquad \qquad\quad\,\,\,\\
\sum_{p=0}^{m-3}\{[(\omega_{2})^{m-2}]^{p}\sum_{j=1}^{m-1}a_{j(p+1)}^{k}\}
=-[(\omega_{2})^{m-2}]^{m-2}\sum_{j=1}^{m-1}a_{j(m-1)}^{k}
\end{aligned}
\right.,
\end{equation}
\begin{widetext}
\begin{equation}
\nonumber
\left\{
\begin{aligned}
\sum_{p=0}^{m-3}(\sum_{j=1}^{m-1}\overline{\omega_{2}}^{j-1}a_{j(p+1)}^{k})
=-\sum_{j=1}^{m-1}\overline{\omega_{2}}^{j-1}a_{j(m-1)}^{k}\qquad\qquad\qquad\qquad\qquad\,\,\\
\sum_{p=0}^{m-3}\{[(\omega_{2})^{2}]^{p}\sum_{j=1}^{m-1}\overline{\omega_{2}}^{j-1}a_{j(p+1)}^{k}\}
=-[(\omega_{2})^{2}]^{m-2}\sum_{j=1}^{m-1}\overline{\omega_{2}}^{j-1}a_{j(m-1)}^{k}\qquad\\
\sum_{p=0}^{m-3}\{[(\omega_{2})^{3}]^{p}\sum_{j=1}^{m-1}\overline{\omega_{2}}^{j-1}a_{j(p+1)}^{k}\}
=-[(\omega_{2})^{3}]^{m-2}\sum_{j=1}^{m-1}\overline{\omega_{2}}^{j-1}a_{j(m-1)}^{k}\qquad\\
\vdots \qquad \qquad \qquad \qquad \qquad \qquad \qquad \qquad\\
\sum_{p=0}^{m-3}\{[(\omega_{2})^{m-2}]^{p}\sum_{j=1}^{m-1}\overline{\omega}^{j-1}a_{j(p+1)}^{k}\}
=-[(\omega_{2})^{m-2}]^{m-2}\sum_{j=1}^{m-1}\overline{\omega_{2}}^{j-1}a_{j(m-1)}^{k}
\end{aligned}
\right.,
\end{equation}

\begin{equation}
\nonumber
\left\{
\begin{aligned}
\sum_{p=0}^{m-3}[\sum_{j=1}^{m-1}(\overline{\omega_{2}}^{2})^{j-1}a_{j(p+1)}^{k}]
=-\sum_{j=1}^{m-1}(\overline{\omega_{2}}^{2})^{j-1}a_{j(m-1)}^{k}\qquad\qquad\qquad\qquad\qquad\,\,\,\,\,\\
\sum_{p=0}^{m-3}[(\omega_{2})^{p}\sum_{j=1}^{m-1}(\overline{\omega_{2}}^{2})^{j-1}a_{j(p+1)}^{k}]
=-(\omega_{2})^{m-2}\sum_{j=1}^{m-1}(\overline{\omega_{2}}^{2})^{j-1}a_{j(m-1)}^{k}\qquad\qquad\,\,\,\\
\sum_{p=0}^{m-3}\{[(\omega_{2})^{3}]^{p}\sum_{j=1}^{m-1}(\overline{\omega_{2}}^{2})^{j-1}a_{j(p+1)}^{k}\}
=-[(\omega_{2})^{3}]^{m-2}\sum_{j=1}^{m-1}(\overline{\omega_{2}}^{2})^{j-1}a_{j(m-1)}^{k}\qquad\,\,\\
\vdots \qquad \qquad \qquad \qquad \qquad \qquad \qquad \qquad\quad\\
\sum_{p=0}^{m-3}\{[(\omega_{2})^{m-2}]^{p}\sum_{j=1}^{m-1}(\overline{\omega_{2}}^{2})^{j-1}a_{j(p+1)}^{k}\}
=-[(\omega_{2})^{m-2}]^{m-2}\sum_{j=1}^{m-1}(\overline{\omega_{2}}^{2})^{j-1}a_{j(m-1)}^{k}\\
\end{aligned}
\right.,
\end{equation}
\begin{equation}
\nonumber
\vdots\quad
\end{equation}
\begin{equation}
\nonumber
\left\{
\begin{aligned}
\sum_{p=0}^{m-3}[\sum_{j=1}^{m-1}(\overline{\omega_{2}}^{m-2})^{j-1}a_{j(p+1)}^{k}]
=-\sum_{j=1}^{m-1}(\overline{\omega_{2}}^{d-2})^{j-1}a_{j(m-1)}^{k}\qquad\qquad
\qquad\qquad\qquad\,\,\,\,\,\,\\
\sum_{p=0}^{m-3}[(\omega_{2})^{p}\sum_{j=1}^{m-1}(\overline{\omega_{2}}^{m-2})^{j-1}a_{j(p+1)}^{k}]
=-(\omega_{2})^{m-2}\sum_{j=1}^{m-1}(\overline{\omega_{2}}^{m-2})^{j-1}a_{j(m-1)}^{k}\qquad\qquad\,\,\,\\
\sum_{p=0}^{m-3}\{[(\omega_{2})^{2}]^{p}\sum_{j=1}^{m-1}(\overline{\omega_{2}}^{m-2})^{j-1}a_{j(p+1)}^{k}\}
=-[(\omega_{2})^{2}]^{m-2}\sum_{j=1}^{m-1}(\overline{\omega_{2}}^{m-2})^{j-1}a_{j(m-1)}^{k}\qquad\,\,\\
\vdots \qquad \qquad \qquad \qquad \qquad \qquad \qquad \qquad\qquad\\
\sum_{p=0}^{m-3}\{[(\omega_{2})^{m-3}]^{p}\sum_{j=1}^{m-1}(\overline{\omega_{2}}^{m-2})^{j-1}a_{j(p+1)}^{k}\}
=-[(\omega_{2})^{d-3}]^{m-2}\sum_{j=1}^{m-1}(\overline{\omega_{2}}^{m-2})^{j-1}a_{j(m-1)}^{k}\\
\end{aligned}
\right.,
\end{equation}
\end{widetext}

By Lemma 1-3 and Eqs. (A8), we can get the unique solutions of the above $m-2$ systems of equations, $i.e.,$
\begin{equation}
\left\{
\begin{aligned}
\sum_{j=1}^{m-2}a_{jq}^{k}=a_{(m-1)(m-1)}^{k}\qquad\qquad\qquad\qquad\,\,\,\,\\
\sum_{j=1}^{m-2}\overline{\omega_{2}}^{j-1}a_{jq}^{k}=\overline{\omega_{2}}^{q-1}a_{(m-1)(m-1)}^{k}
\qquad\qquad\,\,\\
\sum_{j=1}^{m-2}(\overline{\omega_{2}}^{2})^{j-1}a_{jq}^{k}=(\overline{\omega_{2}}^{2})^{q-1}
a_{(m-1)(m-1)}^{k}\qquad\\
\vdots \qquad \qquad \qquad \\
\sum_{j=1}^{m-2}(\overline{\omega_{2}}^{m-2})^{j-1}a_{jq}^{k}=(\overline{\omega_{2}}^{m-2})^{q-1}
a_{(m-1)(m-1)}^{k}
\end{aligned}
\right.,
\end{equation}
where $q=1,\,2,\,\cdots,\,m-2.$

By Lemma 1-3 and Eqs. (A9), we have
\begin{equation}
\left\{
\begin{aligned}
a_{jq}^{k}=0\qquad\qquad\,\,\,\\
a_{qq}^{k}=a_{(m-1)(m-1)}^{k}\\
\end{aligned}
\right.
\end{equation}
for $q=1,2,\cdots,m-2;$ $j=1,2,\cdots,m-2$ and $j\neq q.$

Similary, since the states in the set $\{(M_{k}\otimes I_{n\times n})|\phi_{\eta+n}\rangle:$ $\eta=0,$ $1,$ $\cdots$, $m-2\}$ should be orthogonal on Alice's side, we have
\begin{equation}
\left\{
\begin{aligned}
a_{jq}^{k}=0\,\,\,\,\,\,\,\\
a_{qq}^{k}=a_{00}^{k}\\
\end{aligned}
\right.
\end{equation}
for $q=1,\,2,\,\cdots,\,m-2;$ $j=1,\,2,\,\cdots,\,m-2$ and $j\neq q.$

By Eqs. (A3), (A4), (A7), (A8), (A10) and (A11), we have\\
$M_{k}^{\dag}M_{k}=\qquad\qquad\qquad\qquad\qquad\qquad\quad\\$
\begin{equation}
\begin{split}
\left[
  \begin{array}{cccc}
    a_{(m-1)(m-1)}^{k}   &0          &\cdots   &0\\
    0            &a_{(m-1)(m-1)}^{k} &\cdots   &0\\
    0            &0          &\ddots   &0\\
    0            &0          &\cdots   &a_{(m-1)(m-1)}^{k}\\
  \end{array}
\right]
\end{split}\nonumber
\end{equation}
for $k=1,2,\cdots,l.$ This means that all the POVM elements are proportional to  identity matrix. That is, Alice cannot start with a nontrivial measurement to keep the post-measurement states orthogonal.

In fact, Bob will face a similar case as Alice does since the set of these $2(m+n)-4$ states has a symmetrical structure. Therefore, these $2(m+n)-4$ states cannot be exactly distinguished by using only LOCC. This completes the proof.
\end{proof}
\section{Prood of Theorem 5}
\begin{proof}
We prove whether Alice or Bob cannot perform a nontrivial POVM measurement to preserve the orthogonality of the OPSs that are orthogonal only on one side. Without loss of generality, Suppose that Alice firstly performs a general POVM measurement with a set of POVM elements
\begin{equation}
\nonumber
\begin{split}
M_{k}^{\dag}M_{k}=
\left[
  \begin{array}{cccc}
    a_{00}^{k}      &a_{01}^{k}       &\cdots     &a_{0(m-1)}^{k}\\
    a_{10}^{k}      &a_{11}^{k}       &\cdots     &a_{1(m-1)}^{k}\\
    \vdots          &\vdots           &\ddots     &\vdots\\
    a_{(m-1)0}^{k}  &a_{(m-1)1}^{k}   &\cdots     &a_{(m-1)(m-1)}^{k}\\
  \end{array}
\right]
\end{split}
\end{equation}
in the basis $\{|0\rangle,\,|1\rangle,\,|2\rangle,\,\cdots,\, |(m-1)\rangle\}$.

Because $(M_{k}\otimes I_{n\times n})|\phi_{1}\rangle$ is orthogonal to $(M_{k}\otimes I_{n\times n})|\phi_{2n+m-2}\rangle$, $(M_{k}\otimes I_{n\times n})|\phi_{2n+m-1}\rangle$ and $(M_{k}\otimes I_{n\times n})|\phi_{2n+m}\rangle$, \emph{i.e.},
\begin{equation}
\nonumber
\left\{
\begin{aligned}
\langle\phi_{1}|M^{\dag}_{k}M_{k}\otimes I_{n\times n}|\phi_{2n+m-2}\rangle=0\\
\langle\phi_{1}|M^{\dag}_{k}M_{k}\otimes I_{n\times n}|\phi_{2n+m-1}\rangle=0\\
\langle\phi_{1}|M^{\dag}_{k}M_{k}\otimes I_{n\times n}|\phi_{2n+m}\rangle=0\quad\\
\end{aligned}
\right.
\end{equation}
and
\begin{equation}
\nonumber
\left\{
\begin{aligned}
\langle\phi_{2n+m-2}|M^{\dag}_{k}M_{k}\otimes I_{n\times n}|\phi_{1}\rangle=0\\
\langle\phi_{2n+m-1}|M^{\dag}_{k}M_{k}\otimes I_{n\times n}|\phi_{1}\rangle=0\\
\langle\phi_{2n+m}|M^{\dag}_{k}M_{k}\otimes I_{n\times n}|\phi_{1}\rangle=0\quad\\
\end{aligned}
\right.,
\end{equation}
we have
\begin{equation}
\left\{
\begin{aligned}
a_{01}^{k}+a_{02}^{k}+a_{03}^{k}=0\qquad\quad\,\,\\
a_{01}^{k}+\omega a_{02}^{k}+\omega^{2}a_{03}^{k}=0\quad\,\,\,\,\\
a_{01}^{k}+\omega^{2} a_{02}^{k}+(\omega^{2})^{2}a_{03}^{k}=0\\
\end{aligned}
\right.
\end{equation}
and
\begin{equation}
\left\{
\begin{aligned}
a_{10}^{k}+a_{20}^{k}+a_{30}^{k}=0\qquad\quad\,\,\\
a_{10}^{k}+\overline{\omega} a_{20}^{k}+\overline{\omega}^{2}a_{30}^{k}=0\quad\,\,\,\,\\
a_{10}^{k}+\overline{\omega}^{2}a_{20}^{k}+(\overline{\omega}^{2})^{2}a_{30}^{k}=0\\
\end{aligned}
\right.,
\end{equation}
where $\overline{\omega}$ is the conjugate complex number of $\omega$.
By Lemma 1-3, Eqs. (B1) and (B2), we have
\begin{equation}
a_{01}^{k}=a_{02}^{k}=a_{03}^{k}=a_{10}^{k}=a_{20}^{k}=a_{30}^{k}=0.
\end{equation}
Because $(M_{k}\otimes I_{n\times n})|\phi_{1}\rangle$ is orthogonal to $(M_{k}\otimes I_{n\times n})|\phi_{2n+m+2\sigma+1}\rangle$ and
$(M_{k}\otimes I_{n\times n})|\phi_{2n+m+2\sigma+2}\rangle$ for $\sigma=0,\,\, 1,\,\, 2,\,\, \cdots,\,\, d_{1}-3$, we have
\begin{equation}
\left\{
\begin{aligned}
a_{0,2\sigma+4}^{k}+a_{0,2\sigma+5}^{k}=0\\
a_{0,2\sigma+4}^{k}-a_{0,2\sigma+5}^{k}=0\\
\end{aligned}
\right.
\end{equation}
and
\begin{equation}
\left\{
\begin{aligned}
a_{2\sigma+4,0}^{k}+a_{2\sigma+5,0}^{k}=0\\
a_{2\sigma+4,0}^{k}-a_{2\sigma+5,0}^{k}=0\\
\end{aligned}
\right..
\end{equation}
By Eqs. (B4) and (B5), we have
\begin{equation}
\begin{aligned}
a_{0,2\sigma+4}^{k}=a_{0,2\sigma+5}^{k}=a_{2\sigma+4,0}^{k}=a_{2\sigma+5,0}^{k}=0\\
\end{aligned}
\end{equation}
for $\sigma=0,\,\, 1,\,\, 2,\,\, \cdots,\,\, d_{1}-3$.

Because any two states of $\{(M_{k}\otimes I_{n\times n})|\phi_{2n+m-2}\rangle$, $(M_{k}\otimes I_{n\times n})|\phi_{2n+m-1}\rangle$ and $(M_{k}\otimes I_{n\times n})|\phi_{2n+m}\rangle\}$, \emph{i.e.},
\begin{equation}
\nonumber
\left\{
\begin{aligned}
\langle\phi_{2n+m-2}|M_{k}^{\dag}M_{k}\otimes I_{n\times n}|\phi_{2n+m-1}\rangle=0\\
\langle\phi_{2n+m-2}|M_{k}^{\dag}M_{k}\otimes I_{n\times n}|\phi_{2n+m}\rangle=0\quad\\
\end{aligned}
\right.,
\end{equation}
\begin{equation}
\nonumber
\left\{
\begin{aligned}
\langle\phi_{2n+m-1}|M_{k}^{\dag}M_{k}\otimes I_{n\times n}|\phi_{2n+m-2}\rangle=0\\
\langle\phi_{2n+m-1}|M_{k}^{\dag}M_{k}\otimes I_{n\times n}|\phi_{2n+m}\rangle=0\quad\\
\end{aligned}
\right.,
\end{equation}
\begin{equation}
\nonumber
\left\{
\begin{aligned}
\langle\phi_{2n+m}|M_{k}^{\dag}M_{k}\otimes I_{n\times n}|\phi_{2n+m-2}\rangle=0\\
\langle\phi_{2n+m}|M_{k}^{\dag}M_{k}\otimes I_{n\times n}|\phi_{2n+m-1}\rangle=0\\
\end{aligned}
\right.,
\end{equation}
we have
\begin{equation}
\left\{
\begin{aligned}
\sum_{j=1}^{3}a_{j1}+\omega\sum_{j=1}^{3}a_{j2}=-\omega^{2}\sum_{j=1}^{3}a_{j3}\quad\,\,\,\\
\sum_{j=1}^{3}a_{j1}+\omega^{2}\sum_{j=1}^{3}a_{j2}=-(\omega^{2})^{2}\sum_{j=1}^{3}a_{j3}\\
\end{aligned}
\right.,
\end{equation}
\begin{equation}
\left\{
\begin{aligned}
\sum_{i=1}^{2}\sum_{j=1}^{3}\overline{\omega}^{j-1}a_{ji}
=-\sum_{j=1}^{3}\overline{\omega}^{j-1}a_{j3}\qquad\qquad\,\,\\
\sum_{i=1}^{2}[(\omega^{2})^{i-1}\sum_{j=1}^{3}\overline{\omega}^{j-1}a_{ji}]
=-\omega^{4}\sum_{j=1}^{3}\overline{\omega}^{j-1}a_{j3}\\
\end{aligned}
\right.,
\end{equation}
\begin{equation}
\left\{
\begin{aligned}
\sum_{i=1}^{2}\sum_{j=1}^{3}(\overline{\omega}^{2})^{j-1}a_{ji}
=-\sum_{j=1}^{3}(\overline{\omega}^{2})^{j-1}a_{j3}\qquad\quad\\
\sum_{i=1}^{2}[\omega^{i-1}\sum_{j=1}^{3}(\overline{\omega}^{2})^{j-1}a_{ji}]
=-\omega^{2}\sum_{j=1}^{3}(\overline{\omega}^{2})^{j-1}a_{j3}\\
\end{aligned}
\right..
\end{equation}
By Eqs. (B7), (B8) and (B9), we have
\begin{equation}
\left\{
\begin{aligned}
\sum_{j=1}^{3}a_{j1}=\sum_{j=1}^{3}a_{j3}\\
\sum_{j=1}^{3}a_{j2}=\sum_{j=1}^{3}a_{j3}\\
\end{aligned}
\right.,
\end{equation}
\begin{equation}
\left\{
\begin{aligned}
\sum_{j=1}^{3}\overline{\omega}^{j-1}a_{j1}=
\omega^{2}\sum_{j=1}^{3}\overline{\omega}^{j-1}a_{j3}\\
\sum_{j=1}^{3}\overline{\omega}^{j-1}
a_{j2}=\omega\sum_{j=1}^{3}\overline{\omega}^{j-1}a_{j3}\\
\end{aligned}
\right.,
\end{equation}
and
\begin{equation}
\left\{
\begin{aligned}
\sum_{j=1}^{3}(\overline{\omega}^{2})^{j-1}a_{j1}
=\omega\sum_{j=1}^{3}(\overline{\omega}^{2})^{j-1}a_{j3}\\
\sum_{j=1}^{3}(\overline{\omega}^{2})^{j-1}a_{j2}
=\omega^{2}\sum_{j=1}^{3}(\overline{\omega}^{2})^{j-1}a_{j3}\\
\end{aligned}
\right.,
\end{equation}
respectively. By Eqs. (B10), (B11) and (B12), we have
\begin{equation}
\left\{
\begin{aligned}
a_{11}^{k}=a_{33}^{k}\\
a_{21}^{k}=a_{13}^{k}\\
a_{31}^{k}=a_{23}^{k}\\
\end{aligned}
\right.,
\end{equation}
\begin{equation}
\left\{
\begin{aligned}
a_{22}^{k}=a_{33}^{k}\\
a_{12}^{k}=a_{23}^{k}\\
a_{32}^{k}=a_{13}^{k}\\
\end{aligned}
\right..
\end{equation}
Similarly, because any two states of $\{(M_{k}\otimes I_{n\times n})|\phi_{n}\rangle$, $(M_{k}\otimes I_{n\times n})|\phi_{n+1}\rangle$ and $(M_{k}\otimes I_{n\times n})|\phi_{n+2}\rangle\}$, we have
\begin{equation}
\left\{
\begin{aligned}
a_{00}^{k}=a_{22}^{k}\\
a_{10}^{k}=a_{02}^{k}\\
a_{20}^{k}=a_{12}^{k}\\
\end{aligned}
\right.,
\end{equation}
\begin{equation}
\left\{
\begin{aligned}
a_{11}^{k}=a_{22}^{k}\\
a_{01}^{k}=a_{12}^{k}\\
a_{21}^{k}=a_{02}^{k}\\
\end{aligned}
\right..
\end{equation}
By Eqs. (B3), (B13), (B14), (B15) and (B16), we have
\begin{equation}
\left\{
\begin{aligned}
a_{00}^{k}=a_{11}^{k}=a_{22}^{k}=a_{33}^{k}\qquad\qquad\qquad\quad\,\\
a_{12}^{k}=a_{13}^{k}=a_{21}^{k}=a_{23}^{k}=a_{31}^{k}=a_{32}^{k}=0\\
\end{aligned}
\right..
\end{equation}
Because  $(M_{k}\otimes I_{n\times n})|\phi_{2n+m+2\sigma+1}\rangle$ and $(M_{k}\otimes I_{n\times n})$ $|\phi_{2n+m+2\sigma+2}\rangle$ are mutually orthogonal only on Alice's side, \emph{i.e.},
\begin{equation}
\left\{
\begin{aligned}
\langle\phi_{2n+m+2\sigma+1}|M_{k}^{\dag}M_{k}\otimes I_{n\times n}|\phi_{2n+m+2\sigma+2}\rangle=0\\
\langle\phi_{2n+m+2\sigma+2}|M_{k}^{\dag}M_{k}\otimes I_{n\times n}|\phi_{2n+m+2\sigma+1}\rangle=0\\
\end{aligned}
\right.,
\end{equation}
we have
\begin{equation}
\left\{
\begin{aligned}
a^{k}_{2\sigma+4,\,2\sigma+4}=a^{k}_{2\sigma+5,\,2\sigma+5}\\
a^{k}_{2\sigma+4,\,2\sigma+5}=a^{k}_{2\sigma+5,\,2\sigma+4}\\
\end{aligned}
\right.,
\end{equation}
for $\sigma=0,\,\, 1,\,\, 2,\,\, \cdots,\,\, d_{1}-3$. Similarly, because $(M_{k}\otimes I_{n\times n})|\phi_{n+2\sigma+3}\rangle$ and $(M_{k}\otimes I_{n\times n})|\phi_{n+2\sigma+4}\rangle$ are mutually orthogonal only on Alice's side, \emph{i.e.},
we have
\begin{equation}
\left\{
\begin{aligned}
a^{k}_{2\sigma+3,\,2\sigma+3}=a^{k}_{2\sigma+4,\,2\sigma+4}\\
a^{k}_{2\sigma+3,\,2\sigma+4}=a^{k}_{2\sigma+4,\,2\sigma+3}\\
\end{aligned}
\right.,
\end{equation}
for $\sigma=0,\,\, 1,\,\, 2,\,\, \cdots,\,\, d_{1}-3$.

Because any of $\{(M_{k}\otimes I_{n\times n})|\phi_{2n+m-2}\rangle$, $(M_{k}\otimes I_{n\times n})|\phi_{2n+m-1}\rangle$, $(M_{k}\otimes I_{n\times n})|\phi_{2n+m}\rangle\}$ is orthogonal to each of  $\{(M_{k}\otimes I_{n\times n})|\phi_{2n+m+2\sigma+1}\rangle$, $(M_{k}\otimes I_{n\times n})|\phi_{2n+m+2\sigma+2}\rangle\}$ only on Alice's side, we have
\begin{equation}
\begin{aligned}
\langle\phi_{t}|M^{\dag}_{k}M_{k}\otimes I_{n\times n}|\phi_{j}\rangle=0,\\
\end{aligned}
\end{equation}
\begin{equation}
\begin{aligned}
\langle\phi_{j}|M^{\dag}_{k}M_{k}\otimes I_{n\times n}|\phi_{t}\rangle=0,\\
\end{aligned}
\end{equation}
for $t=2n+m-2,\,2n+m-1,\,2n+m$ and $j=2n+m+2\sigma+1,\,2n+m+2\sigma+2.$
By Eqs. (B21) and Eqs. (B22), we have
\begin{equation}
\left\{
\begin{aligned}
a^{k}_{1,\,2\sigma+4}=a^{k}_{2,\,2\sigma+4}=a^{k}_{3,\,2\sigma+4}=0\\
a^{k}_{1,\,2\sigma+5}=a^{k}_{2,\,2\sigma+5}=a^{k}_{3,\,2\sigma+5}=0\\
\end{aligned}
\right.
\end{equation}
and
\begin{equation}
\left\{
\begin{aligned}
a^{k}_{2\sigma+4,\,1}=a^{k}_{2\sigma+4,\,2}=a^{k}_{2\sigma+4,\,3}=0\\
a^{k}_{2\sigma+5,\,1}=a^{k}_{2\sigma+5,\,2}=a^{k}_{2\sigma+5,\,3}=0\\
\end{aligned}
\right.
\end{equation}
respectively, for $\sigma=0,\,\, 1,\,\, 2,\,\, \cdots,\,\, d_{1}-3$.

Because $(M_{k}\otimes I_{n\times n})|\phi_{2n+m+2\sigma+1}\rangle$ and $(M_{k}\otimes$ $ I_{n\times n})|\phi_{2n+m+2\sigma+2}\rangle$ are orthogonal to $(M_{k}\otimes$ $I_{n\times n})|\phi_{2n+m+2\lambda+1}\rangle$ and $(M_{k}\otimes I_{n\times n})|\phi_{2n+m+2\lambda+2}\rangle$ only on Alice's side for $\sigma,$ $\lambda$ = $0$, $1$, $2$, $\cdots$, $d_{1}-3$ and $\sigma\neq \lambda$, \emph{i.e.},
\begin{equation}
\nonumber
\left\{
\begin{aligned}
\langle\phi_{2n+m+2\sigma+1}|M_{k}^{\dag}M_{k}\otimes I_{n\times n}|\phi_{2n+m+2\lambda+1}\rangle=0\\
\langle\phi_{2n+m+2\sigma+1}|M_{k}^{\dag}M_{k}\otimes I_{n\times n}|\phi_{2n+m+2\lambda+2}\rangle=0\\
\langle\phi_{2n+m+2\sigma+2}|M_{k}^{\dag}M_{k}\otimes I_{n\times n}|\phi_{2n+m+2\lambda+1}\rangle=0\\
\langle\phi_{2n+m+2\sigma+2}|M_{k}^{\dag}M_{k}\otimes I_{n\times n}|\phi_{2n+m+2\lambda+2}\rangle=0\\
\end{aligned}
\right.,
\end{equation}
so we have
\begin{equation}
\nonumber
\left\{
\begin{aligned}
\sum_{i=2\lambda+4}^{2\lambda+5}a_{2\sigma+4,i}^{k}+
\sum_{i=2\lambda+4}^{2\lambda+5}a_{2\sigma+5,i}^{k}=0\qquad\qquad\quad\\
\sum_{i=2\lambda+4}^{2\lambda+5}(-1)^{i}a_{2\sigma+4,i}^{k}+
\sum_{i=2\lambda+4}^{2\lambda+5}(-1)^{i}a_{2\sigma+5,i}^{k}
=0\quad\\
\sum_{i=2\lambda+4}^{2\lambda+5}a_{2\sigma+4,i}^{k}-
\sum_{i=2\lambda+4}^{2\lambda+5}a_{2\sigma+5,i}^{k}
=0\qquad\qquad\quad\\
\sum_{i=2\lambda+4}^{2\lambda+5}(-1)^{i}a_{2\sigma+4,i}^{k}+
\sum_{i=2\lambda+4}^{2\lambda+5}(-1)^{i+1}a_{2\sigma+5,i}^{k}
=0\\
\end{aligned}
\right..
\end{equation}
Thus we get
\begin{equation}
\begin{aligned}
a_{2\sigma+4,2\lambda+4}^{k}=a_{2\sigma+4,2\lambda+5}^{k}=0,\\
a_{2\sigma+5,2\lambda+4}^{k}=a_{2\sigma+5,2\lambda+5}^{k}=0,
\end{aligned}
\end{equation}
where $\sigma,$ $\lambda$ = $0$, $1$, $2$, $\cdots$, $d_{1}-3$ and $\sigma\neq \lambda$. Similarly, because $(M_{k}\otimes I_{n\times n})|\phi_{n+2\sigma+3}\rangle$ and $(M_{k}\otimes I_{n\times n})|\phi_{n+2\sigma+4}\rangle$ are orthogonal to $(M_{k}\otimes I_{n\times n})|\phi_{n+2\lambda+3}\rangle$ and $(M_{k}\otimes I_{n\times n})|\phi_{n+2\lambda+4}\rangle$ for $\sigma,$ $\lambda$ = $0$, $1$, $2$, $\cdots$, $d_{1}-3$ and $\sigma\neq \lambda$, we have
\begin{equation}
\begin{aligned}
a_{2\sigma+3,2\lambda+3}^{k}=a_{2\sigma+3,2\lambda+4}^{k}=0,\\
a_{2\sigma+4,2\lambda+3}^{k}=a_{2\sigma+4,2\lambda+4}^{k}=0,
\end{aligned}
\end{equation}
where $\sigma,$ $\lambda$ = $0$, $1$, $2$, $\cdots$, $d_{1}-3$ and $\sigma\neq \lambda$.

By Eqs. (B3), (B6), (B17), (B19), (B20), (B23), (B24), (B25) and (B26), we get
\begin{equation}
\nonumber
\begin{split}
M_{k}^{\dag}M_{k}=
\left[
  \begin{array}{cccc}
    a_{00}^{k}  &0           &\cdots     &0\\
    0           &a_{00}^{k}  &\cdots     &0\\
    \vdots      &\vdots      &\ddots     &\vdots\\
    0           &0           &\cdots     &a_{00}^{k}\\
  \end{array}
\right]_{m\times m.}
\end{split}
\end{equation}
This means that Alice only can perform a trivial measurement to preserve the orthogonality of the post-measurement states. So does Bob because of the symmetry of the set of these states. Therefore, the set is nonlocal, \emph{i.e.}, these states cannot be perfectly discriminated by LOCC. This completes the proof.
\end{proof}
\vbox{}

\nocite{*}
\bibliography{apssamp}
\end{document}